\theoremstyle{plain} 
\newtheorem{thm}{Theorem}
\newtheorem{cor}{Corollary}
\newtheorem{lemma}{Lemma}
\newtheorem{prop}{Proposition}
\theoremstyle{definition}
\newtheorem{remark}{Remark}
\newcommand{\bLam}{{\bf \Lambda}}
\newcommand{\bW}{{\bf W}}
\newcommand{\bR}{{\bf R}}
\newcommand{\bI}{{\bf I}}
\newcommand{\bU}{{\bf U}}
\newcommand{\bZ}{{\bf Z}}
\newcommand{\bP}{{\bf P}}
\newcommand{\bu}{{\bf u}}
\newcommand{\bx}{{\bf x}}
\newcommand{\bz}{{\bf z}}
\newcommand{\bo}{{\bf 0}}
\def\bal#1\eal{\begin{align}#1\end{align}}
\newcommand{\bp} {\begin{proof}}
\newcommand{\ep} {\end{proof}}
\newcommand{{\Rb}} {\right)}
\newcommand{{\Rf}} {\right\}}
\begin{document}

\title{Optimal Signaling for Secure Communications over Gaussian MIMO Wiretap Channels}

\author{Sergey Loyka, Charalambos D. Charalambous
%
\thanks{The material in this paper was presented in part at the 2012 IEEE International Symposium on Information Theory, Boston, USA \cite{Loyka} and the 2013 IEEE International Symposium on Information Theory, Istanbul, Turkey \cite{Loyka-13}.}

\thanks{S. Loyka is with the School of Electrical Engineering and Computer Science, University of Ottawa, Ontario,
Canada, K1N 6N5, e-mail: sergey.loyka@ieee.org.}

\thanks{C.D. Charalambous is with the ECE Department, University of Cyprus, 75 Kallipoleos Avenue, P.O. Box 20537, Nicosia, 1678, Cyprus, e-mail: chadcha@ucy.ac.cy}
}
%
\maketitle
%
\begin{abstract}
Optimal signalling over the Gaussian MIMO wire-tap channel is studied under the total transmit power constraint. A closed-form solution for an optimal transmit covariance matrix is obtained when the channel is strictly degraded. In combination with the rank-1 solution, this provides the complete characterization of the optimal covariance for the case of two transmit antennas. The cases of weak eavesdropper and high SNR are considered. It is shown that the optimal covariance does not converge to a scaled identity in the high-SNR regime. Necessary optimality conditions and a tight upper bound on the rank of an optimal covariance matrix are established for the general case, along with a lower bound to the secrecy capacity, which is tight in a number of scenarios.
\end{abstract}
%

\section{Introduction}
\label{sec:introduction}

Multiple-input multiple-output (MIMO) architecture has gained prominence in both academia and industry as a spectrally-efficient approach to wireless communications [1]. With wide deployment of wireless networks, security issues have recently gained additional importance, including information-theoretic approach at the physical layer \cite{Bloch-11}. The physical-layer security in MIMO systems has been recently under active investigation \cite{Li}-\cite{Oggier}. It was demonstrated that Gaussian signaling is optimal over the Gaussian MIMO wire-tap channels (MIMO-WTC) \cite{Liu}-\cite{Oggier} and the optimal transmit covariance has been found for MISO systems \cite{Li}, the 2-2-1 system \cite{Shafiee}, for the parallel channels (where independent signalling is optimal) \cite{Khisti-08}\cite{Li'10}, all under the total power constraint, and in the general MIMO case under the transmit covariance matrix constraint \cite{Bustin}. The high-SNR regime (SNR $\rightarrow \infty$) has been studied in \cite{Khisti-2}. The general case is still an open problem under the total power constraint, since the underlying optimization problem is not convex and explicit solutions are not known, except for some special cases. In fact, an optimal covariance is not known even when the channel is degraded (so that the respective optimization problem is convex), except for the special cases mentioned above.

The main contribution of this paper is a closed-form solution for the optimal covariance when the latter is of full rank under the total power constraint at finite SNR and the conditions for this to be the case in Theorem \ref{thm:Full-rank}. The optimal covariance is shown to have some properties similar to those of the conventional water-filling, but with a few remarkable differences. In particular, the optimal covariance does not converge to a scaled identity in the high-SNR case and thus isotropic signaling is sub-optimal in this regime. Theorem \ref{thm:Full-rank}, in combination with the rank-1 solution, provides the complete characterization of the optimal covariance for the case of two transmit antennas (for any channel, degraded or not). The cases of high-SNR and of weak eavesdropper are elaborated in Corollaries \ref{cor:weak MIMO-WTC} and \ref{cor:high SNR}. An optimal covariance matrix for the general case (degraded or not) is characterized in Proposition \ref{prop:general case}, which shows that there is hidden convexity in the respective optimization problem, even when the channel is not degraded.

Proposition \ref{prop:necessary cond} gives a necessary condition of optimality for the general case, which is a transmission of the positive directions of the difference channel where the main channel is stronger than the eavesdropper one. This strengthens the earlier result in \cite{Li-2} (transmission on non-negative rather than positive directions). While the proof in \cite{Li-2} is rather straightforward and is based on a singular transformation (multiplication by a matrix that is singular when the covariance matrix is rank-deficient) of the KKT conditions, significantly more effort and a new approach are required to establish the stronger result. It avoids using a singular transformation (since some information about active signalling sub-space is irreversibly lost in the process) but relies on a novel property of positive semi-definite matrices (Lemma \ref{lemma:ABC}) and their block-partitioned representation to establish a property of dual variables from which the desired result follow. This result also allows one to establish a tighter bound on the rank of an optimal covariance matrix (Corollary \ref{cor:r+ bound}) than those available in the literature for the general case.

A lower bound on the secrecy capacity in the general case is established in Proposition \ref{prop:LB}. While the original problem is non-convex so that all powerful tools of convex optimization \cite{Boyd} cannot be used, the lower bound is expressed via a convex problem and thus can be solved efficiently by a numerical algorithm. This bound is tight (achieved with equality) in a number of cases: when the SNR is low, or when the legitimate and eavesdropper channels have the same right singular vectors, or when the channel is degraded, thus providing an additional insight into optimal signalling.

An upper bound on the rank of an optimal covariance matrix is given in Corollary \ref{cor:r+ bound} for the general case: the rank is bounded by the dimensionality of a positive sub-space of the difference channel. This bound is stronger than those in \cite{Oggier} and \cite{Li-2} and can be further used to identify the cases for which an optimal covariance is of rank one (when the difference channel has just one strictly positive eigenvalue). Since the rank-1 structure of optimal covariance is known (unlike the sufficient and necessary conditions under which an optimal covariance is of rank-1, for which only limited knowledge is available), this extends the earlier results in \cite{Li}\cite{Shafiee}\cite{Li-2}\cite{Li-11} and provides not only the rank but also an optimal covariance itself in those cases.

\section{Gaussian MIMO Wire-Tap Channel Model}
Let us consider the standard Gaussian MIMO-WTC model,
\begin{equation}
\label{eq1}
{\rm {\bf y}}_1 ={\rm {\bf H}}_1 {\rm {\bf x}}+{\rm {\bf \xi }}_1 , \quad {\rm {\bf y}}_2 ={\rm {\bf H}}_2 {\rm {\bf x}}+{\rm {\bf \xi }}_2
\end{equation}
where ${\rm {\bf x}}=[x_1 ,x_2 ,...x_m ]^T\in \mathcal{C}^{m,1}$ is the transmitted complex-valued signal vector of dimension $m \times 1$, ``T'' denotes transposition, ${\rm {\bf y}}_{1(2)} \in \mathcal{C}^{n,1}$ are the received vectors at the receiver (eavesdropper), ${\rm {\bf \xi }}_{1(2)} $ is the circularly-symmetric additive white Gaussian noise at the receiver (eavesdropper) normalized to unit variance in each dimension, ${\rm {\bf H}}_{1(2)} \in \mathcal{C}^{n_{1(2)} ,m}$ is the $n_{1(2)} \times m$ matrix of the complex channel gains between each Tx and each receive (eavesdropper) antenna, $n_{1(2)} $ and $m$ are the numbers of Rx (eavesdropper) and Tx antennas respectively. The channels ${\rm {\bf H}}_{1(2)} $ are assumed to be quasistatic (i.e., constant for a sufficiently long period of time so that the standard random coding arguments can be invoked within each coherence block) and frequency-flat, with full channel state information (CSI) at the Rx and Tx ends.

For a given transmit covariance matrix ${\rm {\bf R}}=E\left\{ {{\rm {\bf xx}}^+} \right\}$, where $E\left\{ \cdot \right\}$ is statistical expectation, the maximum achievable secure rate between the Tx and Rx (so that the leakage rate between the Tx and eavesdropper converges to zero) is \cite{Bustin}-\cite{Oggier}
\begin{equation}
\label{eq2}
C({\rm {\bf R}})=\ln \frac{\left| {{\rm {\bf I}}+{\rm {\bf W}}_1 {\rm {\bf R}}} \right|}{\left| {{\rm {\bf I}}+{\rm {\bf W}}_2 {\rm {\bf R}}} \right|}=C_1 ({\rm {\bf R}})-C_2 ({\rm {\bf R}})
\end{equation}
where negative $C({\rm {\bf R}})$ is interpreted as zero rate, ${\rm {\bf W}}_i ={\rm {\bf H}}_i^+ {\rm {\bf H}}_i $, $\left( \right)^+$ means Hermitian conjugation, $C_i(\bR)=\ln|\bI+\bW_i\bR|$. The secrecy capacity subject to the total Tx power constraint is
\begin{equation}
\label{eq3}
C_s =\mathop {\max }\limits_{{\rm {\bf R}}\ge 0} C({\rm {\bf R}}) \mbox{\ s.t.} \ tr{\rm {\bf R}}\le P_T
\end{equation}
where $P_T $ is the total transmit power (also the SNR since the noise is normalized). It is well-known that the problem in \eqref{eq3} is not convex in general and explicit solutions for the optimal Tx covariance are not known except for some special cases (e.g. low-SNR, MISO or parallel channels). It was conjectured in \cite{Oggier} that an optimal transmission in \eqref{eq3} is on the directions where the main channel is stronger than the eavesdropper one (i.e. on the positive directions of the difference channel ${\rm {\bf W}}_1 -{\rm {\bf W}}_2$). A similar conclusion, albeit in a different (indirect) form, has been obtained in \cite{Khisti-2} using the degraded channel approach.

Theorem \ref{thm:Full-rank} below gives an explicit, closed-form solution for the optimal full-rank covariance in \eqref{eq3} at finite SNR. A number of additional insights and properties follow.


\section{Closed-Form Solutions}
\label{sec:Full-rank}

In this section, we consider the problem in \eqref{eq3} and obtain its closed-form solutions. The following theorem establishes the optimal covariance ${\rm {\bf R}}^\ast $ for the strictly-degraded channel,  ${\rm {\bf W}}_1 >{\rm {\bf W}}_2$, where ${\bf A} >  {\bf B}$ means that $ {\bf A} - {\bf B}$ is positive definite.

\vspace*{0.5\baselineskip}
\begin{thm}
\label{thm:Full-rank}
Let ${\rm {\bf W}}_1 >{\rm {\bf W}}_2$ and $P_T > P_{T0} $, where $P_{T0} $ is a threshold power given by \eqref{eq.T2.5}. Then, ${\rm {\bf R}}^\ast $ is of full rank and is given by:
\begin{equation}
\label{eq9}
{\rm {\bf R}}^\ast ={\rm {\bf U\Lambda }}_1 {\rm {\bf U}}^+-{\rm {\bf W}}_1^{-1}
\end{equation}
where the columns of the unitary matrix ${\rm {\bf U}}$ are the  eigenvectors of ${\bf Z} = {\bf W}_2 + {\bf W}_2 ({\bf W}_1 - {\bf W}_2)^{-1} {\bf W}_2$, ${\rm {\bf \Lambda }}_1 =diag\{\lambda _{1i} \}>{\rm {\bf 0}}$ is a diagonal positive-definite matrix,
\begin{equation}
\label{eq10}
\lambda _{1i} =\frac{2}{\lambda}\left( {\sqrt {1+\frac{4\mu _i }{\lambda }} +1} \right)^{-1}
\end{equation}
and $\mu _i \ge 0$ are the eigenvalues of ${\bf Z}$; $\lambda >0$ is found from the total power constraint $tr{\rm {\bf R}}^\ast =P_T $ as a unique solution of the following equation:
\begin{equation}
\label{eq11}
\frac{2}{\lambda} \sum\limits_i \left( {\sqrt {1+\frac{4\mu _i }{\lambda }} +1} \right)^{-1} =P_T + tr{\rm {\bf W}}_1^{-1}
\end{equation}
The corresponding secrecy capacity is
\begin{equation}
\label{eq12}
C_s = \ln \frac{\left| {\rm {\bf W}}_{1} \right| | {\rm {\bf \Lambda}}_1|}{| \bI - {\rm {\bf W}}_{2} ( \bW_1^{-1} - \bU\bLam_1\bU^+ )|} =
\ln \frac{\left| {\rm {\bf W}}_{1} \right|}{\left| {\rm {\bf W}}_{2} \right|} + \ln \frac{| {\rm {\bf \Lambda}}_1|}{| {\rm {\bf \Lambda}}_2 |}
\end{equation}
where ${\rm {\bf \Lambda}}_2 = {\rm {\bf \Lambda}}_1 + diag\{\mu_i^{-1}\}$ and 2nd equality holds when $\bW_2 > 0$. $P_{T0} $ can be expressed as follows:
\begin{equation}
\label{eq.T2.5}
P_{T0} = \frac{2(\mu_1+\lambda_{min})}{\lambda_{min}^2} \sum\limits_i \left( {\sqrt {1+\frac{4\mu _i(\mu_1+\lambda_{min}) }{\lambda_{min}^2 }} +1} \right)^{-1} - tr\bW_1^{-1}
\end{equation}
where $\lambda_{min}$ is the minimum eigenvalue of $\bW_1$ and $\mu_1$ is the maximum eigenvalue of $\bZ$.
\end{thm}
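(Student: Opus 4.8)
The plan is to attack the optimization problem \eqref{eq3} directly via the KKT conditions, exploiting the fact that when $\bW_1 > \bW_2$ the objective $C(\bR)$ is concave in $\bR \ge 0$ (since $C_1$ is concave and, on the degraded cone, the combined objective is concave — this is the "hidden convexity" the introduction alludes to), so that the KKT conditions are both necessary and sufficient. First I would write the Lagrangian $L = \ln|\bI + \bW_1\bR| - \ln|\bI + \bW_2\bR| - \lambda(\tr\bR - P_T) + \tr(\bM\bR)$ with $\bM \ge 0$ the dual for $\bR \ge 0$ and $\lambda \ge 0$ for the power constraint, and differentiate: $\bW_1(\bI+\bW_1\bR)^{-1} - \bW_2(\bI+\bW_2\bR)^{-1} - \lambda\bI + \bM = \bo$. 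The key structural move is to guess (and then verify) that for $P_T$ large enough the optimal $\bR^\ast$ is full rank, so complementary slackness forces $\bM = \bo$, leaving the matrix equation $\bW_1(\bI+\bW_1\bR)^{-1} - \bW_2(\bI+\bW_2\bR)^{-1} = \lambda\bI$.

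Next I would manipulate this stationarity equation into a solvable form. Writing $\bA_i = \bI + \bW_i\bR$, note $\bW_i\bA_i^{-1} = \bR^{-1} - \bR^{-1}\bA_i^{-1} = \bR^{-1}(\bI - \bA_i^{-1})$; more usefully, substitute $\bS = \bR + \bW_1^{-1}$ (motivated by the form of \eqref{eq9}) so that $\bI + \bW_1\bR = \bW_1\bS$ and $\bW_1\bA_1^{-1} = \bS^{-1}$. Then the equation becomes $\bS^{-1} - \bW_2(\bI + \bW_2\bR)^{-1} = \lambda\bI$. Expressing $\bR = \bS - \bW_1^{-1}$ in the second term and simplifying the resulting rational matrix expression, I expect everything to collapse to a single matrix quadratic in $\bS$ (or in $\bS^{-1}$) whose coefficient matrix is exactly $\bZ = \bW_2 + \bW_2(\bW_1-\bW_2)^{-1}\bW_2$; diagonalizing $\bZ = \bU\,\mathrm{diag}\{\mu_i\}\,\bU^+$ then decouples the matrix equation into scalar quadratics $\tfrac{\lambda}{2}\lambda_{1i}^2 + \lambda_{1i} - \mu_i^{-1}\cdot(\dots) = 0$, whose positive root is \eqref{eq10}. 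This shows $\bS = \bU\bLam_1\bU^+$, i.e. \eqref{eq9}, and substituting into $\tr\bR^\ast = P_T$ gives \eqref{eq11}; monotonicity of the left side in $\lambda$ gives uniqueness of $\lambda$.

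For the secrecy capacity \eqref{eq12} I would just plug $\bR^\ast = \bU\bLam_1\bU^+ - \bW_1^{-1}$ back into \eqref{eq2}: the numerator $|\bI + \bW_1\bR^\ast| = |\bW_1\bU\bLam_1\bU^+| = |\bW_1||\bLam_1|$, and the denominator rearranges to $|\bI - \bW_2(\bW_1^{-1} - \bU\bLam_1\bU^+)|$; when $\bW_2 > 0$ one factors out $|\bW_2|$ and identifies the remaining factor, using the scalar quadratics, as $|\bLam_2|$ with $\bLam_2 = \bLam_1 + \mathrm{diag}\{\mu_i^{-1}\}$. The remaining and, I expect, most delicate part is the threshold: I must show $\bR^\ast > 0$ (equivalently $\bLam_1 > \bW_1^{-1}$ after conjugation, i.e. every eigenvalue condition $\lambda_{1i} > $ the relevant bound) holds \emph{iff} $P_T > P_{T0}$, and that below this threshold $\bM = \bo$ is inconsistent so the solution is genuinely rank-deficient. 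The worst case for positivity is the smallest eigenvalue of $\bW_1$ paired against the largest $\mu_i$; tracking this worst eigendirection through \eqref{eq10}–\eqref{eq11} and solving for the critical $P_T$ should yield exactly the expression \eqref{eq.T2.5}. The main obstacle is precisely this coupling — $\lambda$ depends on $P_T$ through \eqref{eq11} while the positivity condition depends on $\lambda$ — so establishing the clean threshold requires a careful monotonicity argument showing the binding constraint is the $(\lambda_{min},\mu_1)$ pair and that $P_{T0}$ is obtained by setting that constraint to equality.
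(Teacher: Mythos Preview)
Your plan follows essentially the same route as the paper: set up the KKT conditions, invoke concavity of $C(\bR)$ under $\bW_1 > \bW_2$ so that KKT is sufficient, assume full rank so $\bM = \bo$, and reduce stationarity to a diagonalizable matrix equation. The paper's algebra is a bit cleaner than your asymmetric substitution: rather than setting only $\bS = \bR + \bW_1^{-1}$ and then wrestling the $\bW_2$ term into a matrix quadratic, it treats both channels symmetrically by writing $\bR_i = \bW_i^{-1} + \bR$ (first for $\bW_2 > 0$, with the singular case handled by continuity or the matrix inversion lemma), so that stationarity reads simply $\bR_1^{-1} - \bR_2^{-1} = \lambda\bI$. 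From this one sees at once that $\bR_1$ and $\bR_2$ commute and hence share eigenvectors; since $\bR_2 - \bR_1 = \bW_2^{-1} - \bW_1^{-1} = \bZ^{-1}$, those eigenvectors are exactly the columns of $\bU$, and the scalar relations \eqref{eq10}--\eqref{eq11} fall out without ever forming a matrix quadratic. Your path would reach the same place, just with more work.

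One correction on the threshold: you propose to show $\bR^\ast > 0$ holds \emph{iff} $P_T > P_{T0}$, but the theorem only claims sufficiency, and in fact $P_{T0}$ in \eqref{eq.T2.5} is generally not tight (the paper notes elsewhere that the true threshold may be smaller). The paper derives \eqref{eq.T2.5} not from the exact boundary of $\bU\bLam_1\bU^+ > \bW_1^{-1}$ but from the cruder sufficient condition $\min_i \lambda_{1i} > 1/\lambda_{min}(\bW_1)$; since $\lambda_{1i}$ is decreasing in $\mu_i$, the minimum is attained at $\mu_1$, and setting this equal to $1/\lambda_{min}$ determines a critical $\lambda$ which, substituted into \eqref{eq11}, yields \eqref{eq.T2.5}. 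Drop the ``iff'' and your threshold argument becomes straightforward.
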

\begin{proof}
See Appendix.
\end{proof}
\vspace*{0.5\baselineskip}

It should be pointed out that Theorem \ref{thm:Full-rank} gives an exact (not approximate) optimal covariance at finite SNR ($P_T \rightarrow \infty$ is not required) since $P_{T0}$ is a finite constant that depends only on ${\rm {\bf W}}_1$ and ${\rm {\bf W}}_2$ and this constant is small in some cases:
it follows from \eqref{eq.T2.5} that $P_{T0}\rightarrow 0$ if $\lambda_{min}\rightarrow \infty$, i.e. $P_{T0}$ is small if $\lambda_{min}$ is large. In particular, $P_{T0}$ can be upper bounded as
\begin{equation}
P_{T0} \le \frac{m\mu_1}{\lambda_{min}^2} + \frac{m-1}{\lambda_{min}}
\end{equation}
and if $\lambda_{min} \gg \mu_1$, then
\begin{equation}
P_{T0} \approx \frac{m}{\lambda_{min}} - tr\bW_1^{-1} \le \frac{m-1}{\lambda_{min}} \le 1
\end{equation}
where the last inequality holds if $\lambda_{min} \ge m-1$. Fig. 1 illustrates this case. On the other hand, when $\bW_1-\bW_2$ approaches a singular matrix, it follows that $P_{T0}\rightarrow\infty$, so that $P_{T0}$ is large iff $\bW_1-\bW_2$ is close to singular.

Theorem \ref{thm:Full-rank}, in combination with rank-1 solution in \eqref{eq 4a}, provides the complete solution for the optimal covariance in the $m=2$ case: if the channel is not strictly degraded or if the SNR is not above the threshold, the rank-1 solution in \eqref{eq 4a} applies; otherwise, Theorem \ref{thm:Full-rank} applies. Fig. 1 illustrates this for the following channel:
\bal
\label{eq:example}
\bW_1=
\left[
\begin{array}{cc}
    1.5 & 0.5\\
    0.5 & 1.5\\
\end{array}
\right],\
\bW_2=
\left[
\begin{array}{cc}
    0.35 & 0.15\\
    0.15 & 0.35\\
\end{array}
\right]
\eal
Note that the transition to full-rank covariance takes place at low SNR of about -6 dB, i.e. $P_{T0}$ is not high at all in this case.

We further observe that 1st term in \eqref{eq12} $C_{\infty} =\ln \frac{\left| {\rm {\bf W}}_{1} \right|}{\left| {\rm {\bf W}}_{2} \right|}$ is SNR-independent and the 2nd one $\Delta C = \ln \frac{| {\rm {\bf \Lambda}}_1|}{| {\rm {\bf \Lambda}}_2 |} < 0$ monotonically increases with the SNR. Furthermore, $C_s \rightarrow C_{\infty}$, $\Delta C \rightarrow 0$ as $P_T \rightarrow \infty$, in agreement with Theorem 2 in \cite{Khisti-2}. This is also clear from Fig. 1.

\begin{figure}[htbp]
\centerline{\includegraphics[width=3in]{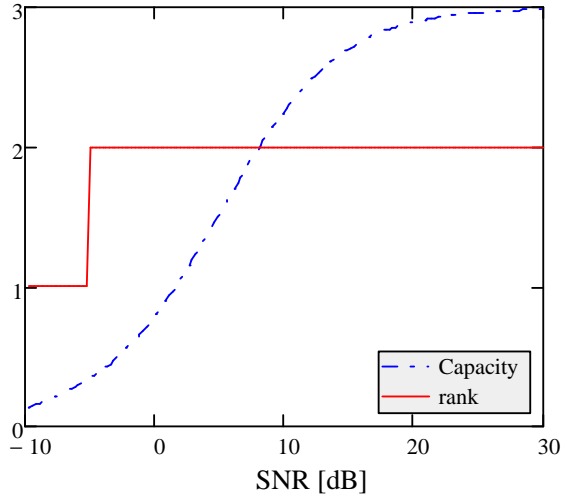}} \caption{Secrecy capacity and the rank of $\bR^*$ vs. SNR [dB] for the channel in \eqref{eq:example}. The transition to full-rank covariance takes place at about -6 dB.}
\label{fig:Fig_1}
\end{figure}

Note also that the second term in (\ref{eq9}) de-emphasizes weak eigenmodes of ${\rm {\bf W}}_1$. Since $\lambda $ is monotonically decreasing as $P_T $ increases (this follows from (\ref{eq11})), $\lambda _{1i} $ monotonically increases with $P_T $, and approaches $\lambda _{1i} \approx 1/\sqrt {\mu _i\lambda } $ at sufficiently high SNR, which is in contrast with the conventional water-filling (WF), where the uniform power allocation is optimal at high SNR. Furthermore, it follows from (\ref{eq10}) that $\lambda _{1i} $ decreases with $\mu _i $,
i.e. stronger eigenmodes of ${\rm {\bf W}}_2^{-1} -{\rm {\bf W}}_1^{-1}=\bZ^{-1} $ (which correspond to larger eigenmodes of ${\rm {\bf W}}_1 $ and weaker ones of ${\rm {\bf W}}_2 )$ receive larger power allocation, which follows the same tendency as the conventional WF. It further follows from \eqref{eq9} that when ${\bf{W}}_1$ and ${\bf{W}}_2$  have the same eigenvectors, ${\bf{R}}^*$ also has the same eigenvectors, i.e. the optimal signaling is on the eigenvectors of ${\bf{W}}_{1(2)}$. While the necessary condition for full-rank ${\rm {\bf R}}^\ast $ (${\rm {\bf W}}_1 >{\rm {\bf W}}_2  )$ has been obtained before in \cite{Oggier}, no solution was found for ${\rm {\bf R}}^\ast$, which is given in Theorem \ref{thm:Full-rank} here.

The case of singular ${\bf W}_1$ can also be included by observing that, under certain conditions, ${\bf R}^\ast$ puts no power on the null space of ${\bf W}_1$ so that all matrices can be projected, without loss of generality, on the positive eigenspace of ${\bf W}_1$ and Theorem \ref{thm:Full-rank} will apply. The following Proposition makes this precise.

\vspace*{0.5\baselineskip}
\begin{prop}
\label{Prop.3-3}
Consider the problem in \eqref{eq3} when $\mathcal{N}({\bf W}_1) \in \mathcal{N}({\bf W}_2)$, where $\mathcal{N}({\bf W}) = \{{\bf x}: {\bf W x = 0}\}$ is the null space of matrix ${\bf W}$ \cite{Zhang}, and assume that
\begin{align}
\label{eq.P3-3.1}
{\bf x}^+ ({\bf W}_1 - {\bf W}_2) {\bf x} > 0 \ \forall {\bf x} \in \mathcal{N}_{\bot},\ \bf{x} \neq 0,
\end{align}
where $\mathcal{N}_{\bot}$ is orthogonal complement of $\mathcal{N}({\bf W}_1)$, i.e. ${\bf W}_1 - {\bf W}_2$ is positive definite on $\mathcal{N}_{\bot}$. When the SNR exceeds a threshold (as in Theorem \ref{thm:Full-rank}), the optimal covariance in \eqref{eq3} is
\begin{align}
{\bf R}^\ast = {\bf U}_{\bot} {\bf \tilde{R}}^\ast {\bf U}_{\bot}^+
\end{align}
where ${\bf \tilde{R}}^\ast$ is the optimal covariance of Theorem \ref{thm:Full-rank} when applied to the projected matrices $\tilde{\bW}_i = {\bf U}_{\bot}^+ {\bf W}_i {\bf U}_{\bot}$ and the columns of semi-unitary matrix ${\bf U}_{\bot}$ form an orthonormal basis of $\mathcal{N}_{\bot}$. Furthermore, $rank({\bf R}^\ast) = rank(\bW_1)$.
\end{prop}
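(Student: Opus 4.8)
The plan is to reduce the singular-$\bW_1$ problem to the strictly-degraded setting of Theorem \ref{thm:Full-rank} by showing that no optimal $\bR$ places power on $\sN(\bW_1)$, and then applying that theorem verbatim to the projected channel. First I would decompose the transmit space as $\mathcal{C}^{m,1} = \mathcal{N}_{\bot} \oplus \sN(\bW_1)$ and write any feasible $\bR \ge 0$ in block form with respect to this decomposition, $\bR = \bU_{\bot}\bR_{11}\bU_{\bot}^+ + (\text{cross terms}) + \bU_0 \bR_{22}\bU_0^+$, where the columns of $\bU_0$ span $\sN(\bW_1)$. The key observation is that $\bW_1 \bx = 0$ and (by the hypothesis $\sN(\bW_1)\subseteq\sN(\bW_2)$) $\bW_2\bx = 0$ for $\bx \in \sN(\bW_1)$, so both $\bW_1$ and $\bW_2$ are supported entirely on $\mathcal{N}_{\bot}$. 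Hence $C(\bR) = \ln|\bI + \bW_1\bR| - \ln|\bI + \bW_2\bR|$ depends only on how $\bR$ acts through these operators.

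Next I would argue that putting any power outside $\mathcal{N}_{\bot}$ is wasteful. Since $\bW_i = \bU_{\bot}\tilde{\bW}_i \bU_{\bot}^+$ with $\tilde{\bW}_i = \bU_{\bot}^+\bW_i\bU_{\bot}$, one has $\bI + \bW_i\bR$ similar (via a unitary completion $[\bU_{\bot}\ \bU_0]$) to a block-triangular matrix whose determinant equals $|\bI + \tilde{\bW}_i \bU_{\bot}^+\bR\bU_{\bot}|$ — that is, $C(\bR)$ depends on $\bR$ only through its compression $\bR_{11} := \bU_{\bot}^+\bR\bU_{\bot}$. Therefore $C(\bR) = \tilde{C}(\bR_{11})$, where $\tilde{C}$ is the secrecy rate of the projected channel $(\tilde{\bW}_1,\tilde{\bW}_2)$. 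Moreover $\tr\bR = \tr\bR_{11} + \tr(\bU_0^+\bR\bU_0) \ge \tr\bR_{11}$ with equality iff $\bR$ has no component on $\sN(\bW_1)$ (recalling that for $\bR\ge 0$ a zero diagonal block forces the corresponding off-diagonal blocks to vanish). Consequently, for any feasible $\bR$ the matrix $\bR' = \bU_{\bot}\bR_{11}\bU_{\bot}^+$ is also feasible, achieves the same secrecy rate, and uses power $\tr\bR_{11} \le P_T$; if $\bR$ carried any power on $\sN(\bW_1)$, we could strictly increase the effective power budget for $\bR_{11}$, which (since $\tilde{C}$ is strictly increasing in the power constraint in the full-rank regime) would strictly increase the rate — so every optimal $\bR$ must be of the form $\bU_{\bot}\tilde{\bR}^\ast\bU_{\bot}^+$ with $\tilde{\bR}^\ast$ optimal for the projected problem under the full budget $P_T$.

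Finally, the hypothesis \eqref{eq.P3-3.1} states precisely that $\tilde{\bW}_1 > \tilde{\bW}_2$ on $\mathcal{N}_{\bot}$, i.e. the projected channel is strictly degraded, so Theorem \ref{thm:Full-rank} applies to $(\tilde{\bW}_1,\tilde{\bW}_2)$ and yields the closed-form $\tilde{\bR}^\ast = \bU\bLam_1\bU^+ - \tilde{\bW}_1^{-1}$ (inverse taken on $\mathcal{N}_{\bot}$), which is full rank on $\mathcal{N}_{\bot}$ whenever $P_T$ exceeds the corresponding threshold $P_{T0}$ computed from $\tilde{\bW}_1,\tilde{\bW}_2$. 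Since $\dim\mathcal{N}_{\bot} = \mathrm{rank}(\bW_1)$ and $\bU_{\bot}$ has full column rank, $\mathrm{rank}(\bR^\ast) = \mathrm{rank}(\tilde{\bR}^\ast) = \mathrm{rank}(\bW_1)$, completing the proof. The main obstacle I anticipate is making the "no power on the null space" step fully rigorous: one must carefully handle the positive-semidefinite block structure (a vanishing diagonal block kills the off-diagonal blocks) and invoke strict monotonicity of the projected secrecy capacity in $P_T$ in the relevant range — this monotonicity is implicit in Theorem \ref{thm:Full-rank} via \eqref{eq11} (as $\lambda$ strictly decreases when $P_T$ increases, $\Delta C$ strictly increases), so the argument closes, but the bookkeeping with the unitary completion and the determinant identity deserves care.
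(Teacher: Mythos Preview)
Your proposal is correct and follows essentially the same route as the paper: both arguments observe that $|\bI+\bW_i\bR|=|\bI+\tilde{\bW}_i\,\bU_\bot^+\bR\bU_\bot|$ and $\tr(\bU_\bot^+\bR\bU_\bot)\le\tr\bR$, then apply Theorem~\ref{thm:Full-rank} to the projected channel $(\tilde{\bW}_1,\tilde{\bW}_2)$, which is strictly degraded by \eqref{eq.P3-3.1}. Your write-up is in fact more careful than the paper's very terse proof: you spell out the block-triangular determinant identity via the unitary completion $[\bU_\bot\ \bU_0]$, and you add a strict-monotonicity argument to conclude that \emph{every} optimizer (not just some optimizer) puts zero power on $\sN(\bW_1)$, whereas the paper is content to exhibit one optimal covariance of the claimed form.
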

\begin{proof}
Observe that ${\bf W}_i {\bf x} = {\bf W}_i {\bf x}_{\bot}$, where ${\bf x}_{\bot} = {\bf U}_{\bot} {\bf U}_{\bot}^+{\bf x}$ is the orthogonal projection of $\bf x$ on $\mathcal{N}_{\bot}$, so that
\begin{align}
\notag
|{\bf I}+{\bf W}_i{\bf R}| &= |{\bf I}+{\bf W}_i{\bf U}_{\bot} {\bf U}_{\bot}^+{\bf R}{\bf U}_{\bot} {\bf U}_{\bot}^+| \\
&= |{\bf I}+{\bf U}_{\bot}^+{\bf W}_i{\bf U}_{\bot} {\bf U}_{\bot}^+{\bf R}{\bf U}_{\bot}|
\end{align}
and $tr({\bf U}_{\bot}^+{\bf R}{\bf U}_{\bot}) \le tr({\bf R})$ so that one can use the projected matrices ${\bf \tilde{R}} = {\bf U}_{\bot}^+{\bf R}{\bf U}_{\bot}, {\bf \tilde{W}}_i = {\bf U}_{\bot}^+{\bf W}_i{\bf U}_{\bot}$ in Theorem \ref{thm:Full-rank} to obtain the desired solution. \eqref{eq.P3-3.1} insures that ${\bf \tilde{W}}_1 - {\bf \tilde{W}}_2 > 0$ so that Theorem \ref{thm:Full-rank} applies.
\end{proof}

With Proposition \ref{Prop.3-3} in mind, the conditions of Theorem \ref{thm:Full-rank} are both sufficient and necessary (except for the power threshold $P_{T0}$ which may be less than in \eqref{eq.T2.5}) for an optimal covariance to be of full-rank.

It is instructive to consider the case when the required channel is much stronger than the eavesdropper one, ${\rm {\bf W}}_1 \gg {\rm {\bf W}}_2  $, meaning that all eigenvalues of ${\rm {\bf W}}_1 $ are much larger than those of ${\rm {\bf W}}_2$.

\begin{cor}
\label{cor:weak MIMO-WTC}
Consider the MIMO-WTC in (\ref{eq1}) under the conditions of Theorem \ref{thm:Full-rank} and when the eavesdropper channel is much weaker than the required one,
\begin{equation}
\label{eq13}
\lambda _i ({\rm {\bf W}}_2  )\ll m(P_T +tr{\rm {\bf W}}_1^{-1} )^{-1}/4
\end{equation}
where $\lambda_i ({\rm {\bf W}}_2)$ is $i$-th eigenvalue of ${\rm {\bf W}}_2$,  e.g. when ${\rm {\bf W}}_2  \to {\rm {\bf 0}}$ and fixed ${\rm {\bf W}}_1$. Then the optimal covariance in (\ref{eq9}) becomes
\begin{equation}
\label{eq14}
{\rm {\bf R}}^\ast \approx {\rm {\bf U}}_1 (\lambda ^{-1}{\rm {\bf I}}-{\rm {\bf D}}_1^{-1} ){\rm {\bf U}}_1^+ -\lambda ^{-2}{\rm {\bf W}}_2
\end{equation}
where ${\rm {\bf W}}_1 ={\rm {\bf U}}_1 {\rm {\bf D}}_1 {\rm {\bf U}}_1^+ $ is the eigenvalue decomposition, so that the columns of ${\rm {\bf U}}_1 $ are the eigenvectors, and the diagonal entries of ${\rm {\bf D}}_1 $ are the eigenvalues.
\end{cor}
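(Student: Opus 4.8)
The plan is to start from the exact expression in \eqref{eq9}--\eqref{eq11} and simply track what happens as the eigenvalues of $\bW_2$ become negligible compared to the relevant scale. Recall that $\bZ = \bW_2 + \bW_2(\bW_1-\bW_2)^{-1}\bW_2$, so $\bZ$ is quadratic in $\bW_2$ to leading order: when $\lambda_i(\bW_2)$ is small, $(\bW_1-\bW_2)^{-1}\approx \bW_1^{-1}$ and hence $\bZ \approx \bW_2 + \bW_2\bW_1^{-1}\bW_2 \approx \bW_2$, since the second term is of second order in $\bW_2$. Thus $\mu_i \approx \lambda_i(\bW_2)$ and, more importantly, $\bU \approx \bU_2$ and in fact $\bU\bLam_2\bU^+$ with $\bLam_2 = \bLam_1 + \mathrm{diag}\{\mu_i^{-1}\}$ — but here the cleaner route is to expand the scalar formula \eqref{eq10} for $\lambda_{1i}$ directly.

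First I would fix $\lambda$ via \eqref{eq11}: when all $\mu_i$ are small relative to $\lambda$ (which I must verify is the regime picked out by \eqref{eq13}), the square roots in \eqref{eq11} are close to $1$, so the left side is $\approx (2/\lambda)\sum_i \tfrac12 = m/\lambda$, giving $\lambda \approx m/(P_T + \tr\bW_1^{-1})$; condition \eqref{eq13} is exactly the statement $\mu_i \ll \lambda/4$, confirming self-consistency. Next I would Taylor-expand \eqref{eq10}: writing $x_i = 4\mu_i/\lambda \ll 1$, $(\sqrt{1+x_i}+1)^{-1} = (2 + x_i/2 + O(x_i^2))^{-1} = \tfrac12(1 - x_i/4 + O(x_i^2))$, so $\lambda_{1i} \approx \lambda^{-1}(1 - \mu_i/\lambda) = \lambda^{-1} - \mu_i\lambda^{-2}$. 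Substituting $\mu_i \approx \lambda_i(\bW_2)$ and $\bU \approx \bU_2$ gives $\bU\bLam_1\bU^+ \approx \lambda^{-1}\bI - \lambda^{-2}\bW_2$ (the $\lambda^{-1}\bI$ term being rotation-invariant, this does not even need $\bU \approx \bU_2$ for its first piece). Then \eqref{eq9} reads $\bR^\ast \approx \lambda^{-1}\bI - \lambda^{-2}\bW_2 - \bW_1^{-1}$, and writing $\bW_1^{-1} = \bU_1\bD_1^{-1}\bU_1^+$ and $\lambda^{-1}\bI = \bU_1(\lambda^{-1}\bI)\bU_1^+$ collapses the first and third terms into $\bU_1(\lambda^{-1}\bI - \bD_1^{-1})\bU_1^+$, which is \eqref{eq14}.

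The main obstacle is making the approximation $\bZ \approx \bW_2$ (hence $\bU \approx \bU_2$) rigorous enough that it can be used inside the eigenvector matrix $\bU$ of \eqref{eq9} without incurring errors that swamp the $\lambda^{-2}\bW_2$ correction term — the two terms one is comparing, $\lambda^{-2}\bW_2$ and the neglected $\lambda^{-2}\bW_2\bW_1^{-1}\bW_2$-type pieces, differ by a factor $\lambda_i(\bW_2)\lambda_i(\bW_1^{-1})$, so the expansion is consistent precisely when $\bW_2$ is small relative to $\bW_1$, which is the stated hypothesis $\bW_1 \gg \bW_2$. I would handle this by carrying the expansion to first order in $\bW_2$ uniformly: $\bZ = \bW_2 + O(\|\bW_2\|^2/\lambda_{\min}(\bW_1))$, perturbation bounds for eigenvectors (Davis--Kahan) then give $\bU = \bU_2 + O(\|\bW_2\|/\mathrm{gap})$, and since the dominant part $\lambda^{-1}\bI$ of $\bU\bLam_1\bU^+$ is unitarily invariant the eigenvector error only affects the already-small correction, so it is second-order overall. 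The remaining steps — solving \eqref{eq11} asymptotically and the scalar Taylor expansion of \eqref{eq10} — are routine and I would state them as such, noting only that \eqref{eq13} is exactly the condition that legitimizes both.
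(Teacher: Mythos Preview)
Your proposal is correct and follows essentially the same route as the paper: Taylor-expand \eqref{eq10} under $4\mu_i/\lambda\ll 1$ to get $\lambda_{1i}\approx\lambda^{-1}-\mu_i\lambda^{-2}$, solve \eqref{eq11} to leading order for $\lambda\approx m/(P_T+\tr\bW_1^{-1})$, and use $\bZ\approx\bW_2$ in \eqref{eq9}. One simplification worth noting: the Davis--Kahan detour is unnecessary, since $\bU\,\mathrm{diag}\{\mu_i\}\,\bU^+=\bZ$ holds \emph{exactly}, so $\bU\bLam_1\bU^+\approx\lambda^{-1}\bI-\lambda^{-2}\bZ$ requires no eigenvector perturbation at all, and the matrix-level approximation $\bZ\approx\bW_2$ (which you already justified) then yields \eqref{eq14} directly.
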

\begin{proof} See Appendix.
\end{proof}

An interpretation of (\ref{eq14}) is immediate: the first term is the standard water-filling on the eigenmodes of ${\rm {\bf W}}_1 $ (which is the capacity-achieving strategy for the regular MIMO channel) and the second term is a correction due to the secrecy requirement: those modes that spill over into the eavesdropper channel get less power to accommodate the secrecy requirement.

Let us know consider the high-SNR regime.

\begin{cor}
\label{cor:high SNR}
When $\bW_2 > 0$, the optimal covariance ${\rm {\bf R}}^\ast $ in (\ref{eq9}) in the high-SNR regime
\begin{equation}
\label{eq15}
P_T \gg \mu_m^{-1/2} \sum\nolimits_i \mu_i^{-1/2}
\end{equation}
(e.g. when $P_T \to \infty )$, where $\mu_m =\min_i \mu_i$, simplifies to
\begin{equation}
\label{eq16}
{\rm {\bf R}}^{\ast} \approx {\rm {\bf U}} diag\{d_i \} {\rm {\bf U}}^+ , \quad d_i =\frac{P_T \mu_i^{-1/2} }{\sum\nolimits_i \mu_i^{-1/2}}
\end{equation}
The corresponding secrecy capacity is
\begin{equation}
\label{eq17}
C_s \approx \ln \frac{\left| {\rm {\bf W}}_{1} \right|}{\left| {\rm {\bf W}}_{2} \right|} - \frac{1}{P_T} \left( \sum\nolimits_i \frac{1}{\sqrt{\mu_i}}  \right)^2
\end{equation}
where we have neglected 2nd and higher order effects in $1/P_T$.
\end{cor}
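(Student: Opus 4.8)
The plan is to start from the exact expressions in Theorem \ref{thm:Full-rank} and expand everything consistently to first order in $1/P_T$ (equivalently, to lowest order in $\lambda$, which by \eqref{eq11} is small at high SNR). First I would analyze the power equation \eqref{eq11}: as $P_T\to\infty$, the left-hand side must grow, forcing $\lambda\to 0$, so that $4\mu_i/\lambda\gg 1$ for every $i$ (here the condition \eqref{eq15} is precisely the quantitative statement that $\lambda\ll 4\mu_i$ for all $i$, since $\mu_m=\min_i\mu_i$ controls the slowest mode). Then $\sqrt{1+4\mu_i/\lambda}+1\approx 2\sqrt{\mu_i/\lambda}$, and \eqref{eq10} gives $\lambda_{1i}\approx \lambda^{-1/2}\mu_i^{-1/2}$. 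Substituting into \eqref{eq11} yields $\lambda^{-1/2}\sum_i\mu_i^{-1/2}\approx P_T+\tr\bW_1^{-1}\approx P_T$, i.e. $\lambda^{-1/2}\approx P_T/\sum_i\mu_i^{-1/2}$. Hence $\lambda_{1i}\approx \mu_i^{-1/2}\lambda^{-1/2}\approx P_T\mu_i^{-1/2}/\sum_j\mu_j^{-1/2}=d_i$, which is exactly \eqref{eq16}; the term $-\bW_1^{-1}$ in \eqref{eq9} is a bounded (SNR-independent) correction, hence negligible compared to the $d_i=O(P_T)$, so $\bR^\ast\approx \bU\,\mathrm{diag}\{d_i\}\,\bU^+$.

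For the capacity \eqref{eq17} I would use the second form in \eqref{eq12}, $C_s=\ln\frac{|\bW_1|}{|\bW_2|}+\ln\frac{|\bLam_1|}{|\bLam_2|}$, valid since $\bW_2>0$ by hypothesis. Here $\bLam_2=\bLam_1+\mathrm{diag}\{\mu_i^{-1}\}$, so $\ln\frac{|\bLam_1|}{|\bLam_2|}=-\sum_i\ln(1+1/(\mu_i\lambda_{1i}))$. Using $\lambda_{1i}\approx\lambda^{-1/2}\mu_i^{-1/2}$ gives $\mu_i\lambda_{1i}\approx\sqrt{\mu_i/\lambda}\gg 1$, so $\ln(1+1/(\mu_i\lambda_{1i}))\approx 1/(\mu_i\lambda_{1i})\approx\sqrt{\lambda}/\sqrt{\mu_i}$. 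Summing, $\ln\frac{|\bLam_1|}{|\bLam_2|}\approx-\sqrt{\lambda}\sum_i\mu_i^{-1/2}\approx-\frac{1}{P_T}\big(\sum_i\mu_i^{-1/2}\big)^2$ after inserting $\sqrt{\lambda}\approx\sum_i\mu_i^{-1/2}/P_T$. Adding the SNR-independent term $\ln(|\bW_1|/|\bW_2|)$ gives \eqref{eq17}.

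The main obstacle — really the only delicate point — is making the asymptotic bookkeeping rigorous rather than heuristic: one must verify that the neglected terms are genuinely $o(1/P_T)$ and not just formally smaller. Concretely, I would keep the next-order term in the expansion of $\lambda_{1i}$ from \eqref{eq10}, namely $\lambda_{1i}=\lambda^{-1/2}\mu_i^{-1/2}\big(1-\tfrac12\sqrt{\lambda/(4\mu_i)}+\cdots\big)$, feed it back through \eqref{eq11} to get the correction to $\lambda$, and check that this perturbs \eqref{eq16} only at relative order $\sqrt{\lambda}=O(1/P_T)$ (i.e. an $O(1)$ additive correction to $d_i=O(P_T)$) and perturbs \eqref{eq17} only at order $\lambda=O(1/P_T^2)$; likewise the dropped $\tr\bW_1^{-1}$ in \eqref{eq11} shifts $\sqrt\lambda$ by a relative $O(1/P_T)$, again affecting $C_s$ only beyond the stated order. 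Establishing these bounds uniformly in $i$ is straightforward once one notes that \eqref{eq15} guarantees $\lambda/\mu_i$ is uniformly small, so all Taylor remainders are controlled by a single small parameter $\sqrt{\lambda}\le\sqrt{\lambda/\mu_m}$.
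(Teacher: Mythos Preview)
Your proposal is correct and follows essentially the same approach as the paper: the paper's proof is simply ``Follows from Theorem \ref{thm:Full-rank} along the same lines as that of Corollary \ref{cor:weak MIMO-WTC},'' and what you have written is precisely that argument spelled out in full (expand \eqref{eq10} in the regime $4\mu_i/\lambda\gg 1$, solve \eqref{eq11} for $\lambda$, then plug into \eqref{eq9} and \eqref{eq12}). Your final paragraph on tracking the neglected terms to confirm they are $o(1/P_T)$ in fact goes beyond the level of rigor the paper provides.
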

\begin{proof}
Follows from Theorem \ref{thm:Full-rank} along the same lines as that of
Corollary \ref{cor:weak MIMO-WTC}.
\end{proof}

Note that the optimal signaling is on the eigenmodes of ${\rm {\bf
W}}_2^{-1} -{\rm {\bf W}}_1^{-1} $ with the optimal power allocation given by $\{d_i \}$. This somewhat resembles the conventional water-filling, but also has a remarkable difference: unlike the conventional WF, the secure WF in (\ref{eq16}) does not converge to the uniform one in the high-SNR regime\footnote{The sub-optimality of the isotropic signalling suggested in Theorem 2 of \cite{Khisti-2} is hiding in the $o(1)$ term there. 2nd term of Eq. \eqref{eq17} above refines that $o(1)$ term.}. However, strong eigenmodes of ${\rm {\bf W}}_2^{-1} -{\rm {\bf W}}_1^{-1} $ (which corresponds to weak modes of ${\rm {\bf W}}_2  $ and strong ones of ${\rm {\bf W}}_1 )$ do get more power, albeit in a form different from that
of the conventional WF.

While Theorem \ref{thm:Full-rank} gives a closed-form full-rank optimal covariance for the strictly degraded channel, the general case remains an open problem. The proposition below provides a characterization of an optimal covariance for the general case.

\begin{prop}
\label{prop:general case}
Consider the general Gaussian MIMO-WTC (not necessarily degraded). Let the columns of semi-unitary matrix $\bU_a$ span the same subspace as the columns of optimal covariance $\bR^*$ in \eqref{eq3}: $span\{\bU_a\}=span\{\bR^*\}$. Then, the optimal covariance can be expressed in the following form:
\bal
\bR^* = \bU_a \bR' \bU_a^+
\eal
where $\bR'$ is given by Theorem \ref{thm:Full-rank} with the substitutions $\bW_i \rightarrow \tilde{\bW}_i = \bU_a^+ \bW_i \bU_a$ (i.e. applied to the channels projected on $span\{\bU_a\}$), and $\tilde{\bW}_1 > \tilde{\bW}_2$.
\end{prop}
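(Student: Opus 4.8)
The plan is to reduce the general, non-convex problem \eqref{eq3} to the strictly-degraded problem solved by Theorem \ref{thm:Full-rank} by projecting onto the active signalling subspace $\mathcal{S}=span\{\bR^*\}$. Assume $C_s>0$ (otherwise $\bR^*=\bo$ and the statement is vacuous) and put $\bR'=\bU_a^+\bR^*\bU_a$. Since the columns of $\bU_a$ form an orthonormal basis of $\mathcal{S}=col(\bR^*)$, the orthogonal projector onto $\mathcal{S}$ is $\bU_a\bU_a^+$ and $\bU_a\bU_a^+\bR^*=\bR^*\bU_a\bU_a^+=\bR^*$, so $\bR^*=\bU_a\bR'\bU_a^+$; moreover $\bR'$ is $r\times r$ with $r=rank(\bR^*)$ and is positive definite, i.e. of full rank.

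First I would show that $\bR'\mapsto\bU_a\bR'\bU_a^+$ maps the problem $\max_{\bR'\ge\bo,\ \tr\bR'\le P_T}\tilde{C}(\bR')$, with $\tilde{C}(\bR')=\ln\frac{|\bI+\tilde{\bW}_1\bR'|}{|\bI+\tilde{\bW}_2\bR'|}$ and $\tilde{\bW}_i=\bU_a^+\bW_i\bU_a$, onto the restriction of \eqref{eq3} to covariances whose column span lies in $\mathcal{S}$, preserving both objective and power. Indeed, arguing as in the proof of Proposition \ref{Prop.3-3}, $|\bI+\bW_i\bU_a\bR'\bU_a^+|=|\bI+\tilde{\bW}_i\bR'|$ by the identity $|\bI+\bA\bB|=|\bI+\bB\bA|$, and $\tr(\bU_a\bR'\bU_a^+)=\tr\bR'$ since $\bU_a^+\bU_a=\bI$. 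Hence every feasible $\bR'$ lifts to a feasible $\bR=\bU_a\bR'\bU_a^+$ in \eqref{eq3} with $C(\bR)=\tilde{C}(\bR')$, so the optimum of the projected problem is $\le C_s$; and since $\bR^*=\bU_a\bR'\bU_a^+$ with $C(\bR^*)=\tilde{C}(\bR')=C_s$, this $\bR'$ attains the projected optimum. Thus $\bR'$ is a full-rank optimal solution of the projected problem.

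Next I would verify that the projected channel is strictly degraded, $\tilde{\bW}_1>\tilde{\bW}_2$, i.e. $\bx^+(\bW_1-\bW_2)\bx>0$ for every nonzero $\bx\in\mathcal{S}$. This is where the necessary optimality condition is essential: by Proposition \ref{prop:necessary cond} an optimal covariance transmits only on the strictly positive directions of $\bW_1-\bW_2$, so $\mathcal{S}=col(\bR^*)$ is contained in the positive eigenspace of $\bW_1-\bW_2$, on which $\bW_1-\bW_2$ is positive definite; conjugating by $\bU_a$ then gives $\tilde{\bW}_1-\tilde{\bW}_2>0$. With $\tilde{\bW}_1>\tilde{\bW}_2$ the projected problem is exactly \eqref{eq3} for the convex, strictly-degraded pair $(\tilde{\bW}_1,\tilde{\bW}_2)$; since its optimum $\bR'$ is of full rank, $P_T$ necessarily exceeds the corresponding threshold (cf. the remark after Proposition \ref{Prop.3-3} on the necessity of the hypotheses of Theorem \ref{thm:Full-rank}), so Theorem \ref{thm:Full-rank} applies with $\bW_i$ replaced by $\tilde{\bW}_i$ and characterizes $\bR'$ through \eqref{eq9}--\eqref{eq11}. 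Together with $\bR^*=\bU_a\bR'\bU_a^+$ this is the claim.

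I expect the main obstacle to be the strict-degradedness step $\tilde{\bW}_1>\tilde{\bW}_2$: it does not follow from convexity or elementary linear algebra and genuinely needs the positive-directions statement of Proposition \ref{prop:necessary cond}; the weaker non-negative-directions result of \cite{Li-2} would only yield $\tilde{\bW}_1\ge\tilde{\bW}_2$, under which Theorem \ref{thm:Full-rank} need not apply ($\tilde{\bW}_1-\tilde{\bW}_2$, and hence $\bZ$, could be singular, or the projected problem could have no full-rank optimum). A secondary point is that the constraint $\tr\bR^*=P_T$ is active, which holds because $\tilde{C}(t\bR')$ is strictly increasing in $t>0$ on a strictly degraded channel, so no power is wasted; this makes $\bR'$ consistent with the normalization $\tr\bR'=P_T$ built into Theorem \ref{thm:Full-rank}.
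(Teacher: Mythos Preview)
Your approach is correct and essentially mirrors the paper's own proof: project onto the active subspace, use $|\bI+\bA\bB|=|\bI+\bB\bA|$ to identify the objectives, show that the projected problem has the same optimal value $C_s$ with $\bR'=\bU_a^+\bR^*\bU_a$ a full-rank optimizer, invoke Proposition~\ref{prop:necessary cond} for strict degradedness $\tilde{\bW}_1>\tilde{\bW}_2$, and then apply Theorem~\ref{thm:Full-rank}.

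One caveat worth fixing: your intermediate claim that ``$\mathcal{S}=col(\bR^*)$ is contained in the positive eigenspace of $\bW_1-\bW_2$'' is not what Proposition~\ref{prop:necessary cond} says and is in general false (the optimal signalling directions need not be eigenvectors of $\bW_1-\bW_2$; e.g.\ the rank-one optimum in \eqref{eq 4a} is a generalized eigenvector of $(\bI+P_T\bW_1,\bI+P_T\bW_2)$). What \eqref{eq4a}/\eqref{eq4} actually give is precisely the statement you need, namely $\bx^+(\bW_1-\bW_2)\bx>0$ for all nonzero $\bx\in\mathcal{S}$, i.e.\ $\bU_a^+(\bW_1-\bW_2)\bU_a>0$ directly, with no eigenspace containment required; the paper's proof simply cites \eqref{eq4a} at this step. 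Drop the eigenspace detour and cite \eqref{eq4a} directly, and your argument is complete.
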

\begin{proof}
See Appendix.
\end{proof}

\begin{remark}
Proposition \ref{prop:general case} gives a closed-from solution for the general (non-degraded) case provided that the active subspace (i.e. the subspace spanned by the columns or active eigenvectors of $\bR^*$) is already known. Note that the knowledge of eigenvectors of $\bR^*$ is not required, but only the subspace they span. This in fact splits the entire problem $\mathcal{P}$ into two sub-problems $\mathcal{P}_1$ and $\mathcal{P}_2$:
\bal
\mathcal{P} = \mathcal{P}_1 \times \mathcal{P}_2
\eal
where $\mathcal{P}_1$ is a non-convex problem of finding the active sub-space (or the active eigenvectors) and $\mathcal{P}_2$ is the convex problem of finding the optimal covariance based on the found active subspace, hence revealing the hidden convexity in the original non-convex problem $\mathcal{P}$. While $\mathcal{P}_2$ is \textit{always} convex, $\mathcal{P}_1$ and thus $\mathcal{P}$ become convex when the channel is degraded.
\end{remark}

\section{Necessary Optimality Conditions and Properties}

In this section, we establish the necessary optimality conditions for the problem in \eqref{eq3} and, based on these conditions, some properties of the optimal solutions when the latter are rank-deficient. In particular, we establish an upper bound on the rank of optimal covariance matrix which is tighter than the known bounds. In some cases, this bound results in an explicit closed-form solution for the optimal covariance.

The following Proposition gives a necessary condition of the optimality in \eqref{eq3}.

\begin{prop}
\label{prop:necessary cond}
Let ${\rm {\bf R}}^{\ast} $ be an optimal covariance in \eqref{eq3} and let ${\bf U}_{r+}$ be a semi-unitary matrix whose columns are the active eigenvectors $\{{\bf u}_{i+}\}$ (i.e. corresponding to positive eigenvalues) of $\bR^*$. Then, the following holds:
\begin{equation}
\label{eq4a}
{\bf U}_{r+}^+ ({\bf W}_1 -{\bf W}_2 ){\bf U}_{r+} > \bf 0
\end{equation}
so that
\begin{equation}
\label{eq4}
\bx^+ (\bW_1 - \bW_2)\bx > \bo \ \forall \bx \in span \{\bu_{i+}\}
\end{equation}
i.e. a necessary condition for an optimal signaling strategy in \eqref{eq3} is to transit over the positive directions of ${\rm {\bf W}}_1 -{\rm {\bf W}}_2 $ (where the legitimate channel is stronger than the eavesdropper)\footnote{After the conference version of this paper has been submitted, we were informed that a weaker result ($\ge$ instead of $>$) was established in \cite{Li-2}. The proof in \cite{Li-2} is based on a singular transformation (multiplication by a singular matrix when $\bR$ is singular), so that some information about the active signalling sub-space is lost and strict inequality cannot be established. On the other hand, we avoid using such transformation and base our proof on some novel properties of positive semi-definite matrices (see Lemma 2) and their block-partitioned representation so that the active signaling sub-space can be characterized more precisely and a tighter upper bound on the rank of an optimal covariance can be established.}.

\end{prop}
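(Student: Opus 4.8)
The plan is to read the strengthened condition \eqref{eq4a} off the KKT stationarity equation of \eqref{eq3}, restricted to the range of $\bR^*$. Even though \eqref{eq3} is non-convex, its feasible set has nonempty interior (e.g. $\bR=\epsilon\bI$ is strictly feasible for small $\epsilon>0$), so Slater's condition holds and the KKT conditions are necessary at any optimum. Since every eigenvalue of $\bI+\bW_i\bR$ is $\ge 1$, $C(\bR)$ is smooth with Hermitian gradient $\nabla_\bR C_i(\bR)=\bA_i:=(\bI+\bW_i\bR)^{-1}\bW_i=\bW_i^{1/2}(\bI+\bW_i^{1/2}\bR\bW_i^{1/2})^{-1}\bW_i^{1/2}$. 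Introducing a multiplier $\lambda\ge 0$ for $\tr\bR\le P_T$ and a matrix multiplier $\bM\ge 0$ for $\bR\ge 0$, I would start from the stationarity identity $\bA_1-\bA_2=\lambda\bI-\bM$ (with $\bA_i$ evaluated at $\bR^*$) together with complementary slackness $\tr(\bM\bR^*)=0$.

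First I would remove $\bM$ on the active subspace. Writing the compact eigendecomposition $\bR^*=\bU_{r+}\bLam_+\bU_{r+}^+$ with $\bLam_+>0$ diagonal, $\bM\ge 0$ and $\tr(\bM\bR^*)=0$ force $\bU_{r+}^+\bM\bU_{r+}=0$, so congruence of the stationarity identity by $\bU_{r+}$ gives $\bU_{r+}^+(\bA_1-\bA_2)\bU_{r+}=\lambda\bI$. Next, with $\bB_i:=\bW_i^{1/2}\bU_{r+}$ one has $\bW_i^{1/2}\bR^*\bW_i^{1/2}=\bB_i\bLam_+\bB_i^+$ and $\bB_i^+\bB_i=\hat{\bW}_i:=\bU_{r+}^+\bW_i\bU_{r+}$, and the push-through identity reduces $\bU_{r+}^+\bA_i\bU_{r+}$ to $(\bI+\hat{\bW}_i\bLam_+)^{-1}\hat{\bW}_i$; thus the restricted stationarity becomes the $r\times r$ equation $(\bI+\hat{\bW}_1\bLam_+)^{-1}\hat{\bW}_1-(\bI+\hat{\bW}_2\bLam_+)^{-1}\hat{\bW}_2=\lambda\bI$, i.e. the KKT equation of the channels projected on the active subspace (the same projection that reappears in Proposition \ref{prop:general case}). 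A further congruence by $\bLam_+^{1/2}$, combined with $(\bI+\bK)^{-1}\bK=\bI-(\bI+\bK)^{-1}$, collapses this to
\begin{equation}
(\bI+\bK_2)^{-1}-(\bI+\bK_1)^{-1}=\lambda\,\bLam_+,\qquad \bK_i:=\bLam_+^{1/2}\hat{\bW}_i\bLam_+^{1/2}\ge 0 .
\end{equation}
If $\lambda>0$, the right-hand side is positive definite, hence $(\bI+\bK_2)^{-1}>(\bI+\bK_1)^{-1}$; since inversion reverses the Loewner order on positive definite matrices, $\bK_1>\bK_2$, i.e. $\bLam_+^{1/2}(\hat{\bW}_1-\hat{\bW}_2)\bLam_+^{1/2}>0$, and a final congruence by $\bLam_+^{-1/2}$ yields $\bU_{r+}^+(\bW_1-\bW_2)\bU_{r+}=\hat{\bW}_1-\hat{\bW}_2>0$, which is \eqref{eq4a}; \eqref{eq4} then follows by writing $\bx=\bU_{r+}\bc$.

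The main obstacle is the \emph{strict} inequality, i.e. ruling out $\lambda=0$ — precisely what separates this from the weaker result ($\ge$) of \cite{Li-2}. Note that if $\lambda=0$ the displayed equation forces $\bK_1=\bK_2$, hence $\hat{\bW}_1=\hat{\bW}_2$ and (using $|\bI+\bW_i\bR^*|=|\bI+\hat{\bW}_i\bLam_+|$) $C(\bR^*)=0$; so an optimal $\bR^*$ with $\lambda=0$ attains zero secrecy rate. Hence whenever $C_s>0$ — the case of interest, in which every optimal $\bR^*$ satisfies $C(\bR^*)=C_s>0$ — one necessarily has $\lambda>0$ and the argument closes (if $C_s=0$ then $\bR^*=\bo$ is optimal and \eqref{eq4a} holds vacuously). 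The reason the present route preserves strictness is that the reduction is carried out by congruence with the \emph{invertible} factor $\bLam_+^{1/2}$, whereas the argument of \cite{Li-2} multiplies the KKT equation by a matrix that is rank-deficient exactly when $\bR^*$ is, thereby discarding information about the active subspace and yielding only the non-strict bound. To avoid the $\bW_i^{1/2}$ manipulations (relevant when $\bW_i$ may be singular), the same reduction can instead be obtained by block-partitioning the KKT matrix equation in an orthonormal basis adapted to the range of $\bR^*$ and invoking the positive-semidefinite monotonicity property of Lemma \ref{lemma:ABC}; I expect that route to be the most delicate part of the write-up.
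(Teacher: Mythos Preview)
Your argument is correct and takes a genuinely different route from the paper's. The paper works with the \emph{full} KKT equation in the form $\lambda(\bI+\bW_1\bR^*)(\bI+\bR^*\bW_2)=\bW_1-\bW_2+\bM$ and first establishes the global statement $\bW_1-\bW_2+\bM>\mathbf{0}$: it invokes Lemma~\ref{lemma:ABC} (the ``$\bA\bB\bC\ge\mathbf{0}$'' lemma) together with a block-partitioned Schur-complement computation to show that $(\bI+\bW_1\bR^*)(\bI+\bR^*\bW_2)\ge\mathbf{0}$, upgrades this to strict positivity via the nonvanishing determinant, and only then compresses by $\bU_{r+}$ and uses $\bM\bU_{r+}=\mathbf{0}$. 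You instead compress \emph{first}, reduce $\bU_{r+}^+\bA_i\bU_{r+}$ to $(\bI+\hat\bW_i\bLam_+)^{-1}\hat\bW_i$ via the push-through identity, and read off $\hat\bW_1>\hat\bW_2$ from the operator-monotone decrease of $t\mapsto(1+t)^{-1}$. Your route is more elementary---it dispenses with Lemma~\ref{lemma:ABC} and the block Schur calculation entirely---and it makes transparent the link to Proposition~\ref{prop:general case} (the compressed stationarity is exactly the full-rank KKT equation for the projected channels). The paper's route, on the other hand, yields the stronger intermediate fact $\bW_1-\bW_2+\bM>\mathbf{0}$, which is what its proof of Corollary~\ref{cor:r+ bound} uses directly; your conclusion $\hat\bW_1-\hat\bW_2>\mathbf{0}$ still gives that corollary, but via Cauchy interlacing rather than the one-line argument the paper uses.

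Two minor remarks. First, your worry about $\bW_i^{1/2}$ when $\bW_i$ is singular is unfounded: the square root of a PSD matrix is always well-defined, and every step (push-through, invertibility of $\bI+\bB_i\bLam_+\bB_i^+$ and $\bI+\bLam_+\hat\bW_i$) goes through without change; in fact the identity $\bU_{r+}^+\bA_i\bU_{r+}=(\bI+\hat\bW_i\bLam_+)^{-1}\hat\bW_i$ can be obtained directly from Woodbury without ever writing $\bW_i^{1/2}$. Second, the ``alternative'' you sketch in your last sentence---block-partition in a basis adapted to $\mathrm{range}(\bR^*)$ and invoke Lemma~\ref{lemma:ABC}---is precisely the paper's proof, so you need not develop it: your push-through argument already stands on its own. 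Your treatment of $\lambda=0$ (forcing $C(\bR^*)=0$, hence $\lambda>0$ whenever $C_s>0$) matches the paper's in substance; the paper phrases it as ``$\lambda=0\Rightarrow\bW_1\le\bW_2$, trivial case not considered,'' which is the same exclusion.
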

\begin{proof}
See the Appendix.
\end{proof}

It was demonstrated in \cite{Oggier} that $rank({\rm {\bf R}}^\ast )<m$ unless ${\rm {\bf W}}_1 >{\rm {\bf W}}_2 $, i.e. an optimal transmission is of low-rank over a non-degraded channel. The Corollary below gives more precise characterization.

\begin{cor}
\label{cor:r+ bound}
Let ${\rm {\bf W}}_1 -{\rm {\bf W}}_2 ={\rm {\bf W}}_+ +{\rm {\bf W}}_- $, where ${\rm {\bf W}}_{+(-)} $ collects positive (negative and zero) eigenmodes of ${\rm {\bf W}}_1 -{\rm {\bf W}}_2 $ (found from its eigenvalue decomposition). Then,
\bal
rank({\rm {\bf R}}^\ast )\le rank({\rm {\bf W}}_+ )\le m,
\eal
i.e. the rank of an optimal covariance ${\rm {\bf R}}^\ast $ does not exceed the number of positive eigenvalues of ${\rm {\bf W}}_1 -{\rm {\bf W}}_2 $ (the rank of ${\rm {\bf W}}_+ )$.
\end{cor}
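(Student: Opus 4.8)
The plan is to deduce Corollary~\ref{cor:r+ bound} directly from Proposition~\ref{prop:necessary cond} by a dimension-counting argument. Let $r_+ = \operatorname{rank}({\bf R}^\ast)$ and let ${\bf U}_{r+} \in \mathcal{C}^{m,r_+}$ be the semi-unitary matrix of active eigenvectors of ${\bf R}^\ast$, as in Proposition~\ref{prop:necessary cond}. Set $\mathcal{S} = \operatorname{span}\{{\bf u}_{i+}\}$, a subspace of $\mathcal{C}^{m,1}$ of dimension $r_+$. Proposition~\ref{prop:necessary cond}, specifically \eqref{eq4}, says that ${\bf x}^+({\bf W}_1 - {\bf W}_2){\bf x} > 0$ for every nonzero ${\bf x} \in \mathcal{S}$; that is, the Hermitian form associated with ${\bf W}_1 - {\bf W}_2$ is positive definite when restricted to $\mathcal{S}$.

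The key step is then a standard fact from the theory of Hermitian forms (a consequence of the Courant--Fischer min-max characterization, or of Sylvester's law of inertia): if a Hermitian matrix ${\bf W}_1 - {\bf W}_2$ is positive definite on a subspace $\mathcal{S}$, then $\dim \mathcal{S}$ cannot exceed the number of strictly positive eigenvalues of ${\bf W}_1 - {\bf W}_2$. Indeed, writing ${\bf W}_1 - {\bf W}_2 = {\bf W}_+ + {\bf W}_-$ as in the statement, where ${\bf W}_-$ collects the non-positive eigenmodes, let $\mathcal{S}_- = \operatorname{span}\{{\bf W}_-\}$ be the (at most) $(m - \operatorname{rank}({\bf W}_+))$-dimensional invariant subspace on which ${\bf W}_1 - {\bf W}_2 \le 0$. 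If $\dim \mathcal{S} > \operatorname{rank}({\bf W}_+) \ge m - \dim \mathcal{S}_-$, then $\dim \mathcal{S} + \dim \mathcal{S}_- > m$, so $\mathcal{S} \cap \mathcal{S}_-$ contains a nonzero vector ${\bf x}_0$. But then ${\bf x}_0^+({\bf W}_1 - {\bf W}_2){\bf x}_0 > 0$ from \eqref{eq4} and ${\bf x}_0^+({\bf W}_1 - {\bf W}_2){\bf x}_0 \le 0$ from ${\bf x}_0 \in \mathcal{S}_-$, a contradiction. Hence $r_+ = \dim \mathcal{S} \le \operatorname{rank}({\bf W}_+)$, and $\operatorname{rank}({\bf W}_+) \le m$ is trivial since ${\bf W}_+ \in \mathcal{C}^{m,m}$.

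There is essentially no obstacle here: once Proposition~\ref{prop:necessary cond} is in hand, the corollary is a short linear-algebra argument. The only point requiring a little care is the definition of ${\bf W}_-$ as collecting \emph{negative and zero} eigenmodes, so that ${\bf x}^+({\bf W}_1-{\bf W}_2){\bf x} \le 0$ (not $< 0$) on $\mathcal{S}_-$ --- this non-strict inequality is exactly what is needed to contradict the strict inequality in \eqref{eq4}, and it is why the strengthening of \cite{Li-2}'s result from $\ge$ to $>$ in Proposition~\ref{prop:necessary cond} is what makes this rank bound work (a non-strict version of \eqref{eq4} would only give the weaker statement with ${\bf W}_-$ collecting strictly negative modes). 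All the heavy lifting has already been done in establishing Proposition~\ref{prop:necessary cond}; this corollary merely harvests it.
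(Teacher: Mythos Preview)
Your proof is correct. Both your argument and the paper's rest on the same inertia principle --- the dimension of any subspace on which a Hermitian matrix is positive definite cannot exceed its number of strictly positive eigenvalues --- but the routes differ. The paper does not invoke Proposition~\ref{prop:necessary cond} as a black box; it reaches back into that proposition's proof for the fact ${\bf W}_1 - {\bf W}_2 + {\bf M} > {\bf 0}$ and the complementary slackness ${\bf M}{\bf R}^\ast = {\bf 0}$, then applies a lemma of the form $r_+({\bf S}^+{\bf A}{\bf S}) \le r_+({\bf A})$ with ${\bf S} = ({\bf R}^{\ast})^{1/2}$. Your approach is more self-contained: it uses only the statement \eqref{eq4} and a direct subspace-intersection argument, never touching the dual variable ${\bf M}$ or the square-root construction. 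Your version makes clearer that the corollary is a pure linear-algebra consequence of \eqref{eq4a}; the paper's makes the KKT provenance more visible. One small point of hygiene: $\mathcal{S}_-$ should be defined as the span of the eigenvectors of ${\bf W}_1-{\bf W}_2$ with non-positive eigenvalues (the column space of ${\bf W}_-$ would miss the zero-eigenspace), and with that reading its dimension is \emph{exactly} $m - \operatorname{rank}({\bf W}_+)$, not merely ``at most''; this is what makes your chain $\dim\mathcal{S} > \operatorname{rank}({\bf W}_+) = m - \dim\mathcal{S}_-$ go through.
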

\begin{proof}
We need the following technical Lemma, which is a direct consequence of Corollary 4.5.11 in \cite{Horn}:

\begin{lemma}
\label{lem:r+}
Let $\bf A$ be Hermitian and $r_+({\bf A})$ be its number of positive eigenvalues. Then $ r_+({\bf S^+ A S}) \leq r_+({\bf A})$, where $\bf S$ is any matrix of appropriate size.
\end{lemma}

Lemma \ref{lem:r+} says that applying the transformation $\bf S^+ A S$ to $\bf A$ cannot increase the number of its positive eigenvalues (since $\bf S$ can be singular; this number stays the same if $\bf S$ is full rank). Using this Lemma with ${\bf S} = {\bf R}^{*1/2}$ and ${\bf A} = {\bf W}_1 - {\bf W}_2 + {\bf M}$, one obtains:
\begin{align}
r_+({\bf R}^*) &= r_+({\bf R}^{*1/2}({\bf W}_1 - {\bf W}_2 + {\bf M}){\bf R}^{*1/2}) \\
&= r_+({\bf R}^{*1/2}({\bf W}_1 - {\bf W}_2){\bf R}^{*1/2}) \\
&\leq r_+({\bf W}_1 - {\bf W}_2) = rank({\bf W}_+)
\end{align}
where 1st equality follows from the fact that ${\bf W}_1 - {\bf W}_2 + {\bf M} > \bf 0$ (which has been established in the proof of Proposition \ref{prop:necessary cond}), 2nd equality follows from ${\bf M}{\bf R}^*=0$, and the inequality follows from Lemma \ref{lem:r+}.
\end{proof}

Note that the rank bound in Corollary \ref{cor:r+ bound} is stronger than the corresponding bound in \cite{Li-2}, $rank(\bR^*) \le rank(\bW_1-\bW_2)$, especially when the difference matrix $\bW_1-\bW_2$ has many negative eigenvalues (e.g. when the eigenvalues of $\bW_1-\bW_2$ are $\{1,-1,..,-1\}$, the bound in \cite{Li-2} is trivial: $rank(\bR^*) \le m$, while our bound gives the true rank: $rank(\bR^*)=1$).

When $rank({\bf W}_+ ) =1$, the optimal covariance ${\bf R}^\ast$ is of rank-1 from Corollary \ref{cor:r+ bound} and hence the capacity and the covariance follow from \eqref{eq3}\footnote{This result has been obtained before, albeit in a different way, in \cite{Li-2}. Note however, that our result here is stronger: it does not require $\bW_1-\bW_2$ to be non-singular while \cite{Li-2} does, so that the latter result does not apply when the eigenvalues of $\bW_1-\bW_2$ are e.g. $\{1,0,..,0,-1,..,-1\}$ while our result does apply to such scenario.}:
\begin{equation}
\label{eq 4a}
C_s = \ln \lambda_1, \ {\bf R}^\ast = P_T {\bf u}_1 {\bf u}_1^+
\end{equation}
where $\lambda_1$, ${\bf u}_1$ are the largest eigenvalue and corresponding eigenvector of $({\bf I}+P_T{\bf W}_2)^{-1}({\bf I}+P_T{\bf W}_1)$ or, equivalently, the largest generalized eigenvalue and corresponding eigenvector of $({\bf I}+P_T{\bf W}_1, {\bf I}+P_T{\bf W}_2)$, so that transmit beamforming on ${\bf u}_1$ is the optimal strategy. Note that this result is more general than those in \cite{Li}\cite{Shafiee} as the latter two apply to a single antenna channel (either at the receiver or eavesdropper) while the result above holds for any number of antennas at any end. Furthermore, the signaling in \eqref{eq 4a} is also optimal for any $rank({\bf W}_+ )\ge 1$ at sufficiently small SNR, where $\lambda_1$, ${\bf u}_1$ become the largest eigenvalue and corresponding eigenvector of the difference channel ${\bf W}_1 - {\bf W}_2$.

The following Proposition establishes a lower bound to the non-convex problem in \eqref{eq3} via a convex optimization problem (for any channel, degraded or not).

\begin{prop}
\label{prop:LB}
The secrecy capacity can be lower bounded as follows:
\bal
\label{eq8a}
C_s \ge  \max_{\bR \ge 0} C_+(\bR)\ \mathrm{s.t.} \ tr\bR \le P_T,
\eal
where
\begin{equation}
\label{eq6}
C_+ ({\rm {\bf R}})=\ln \frac{\left| {{\rm {\bf I}}+{\rm {\bf W}}_{1+} {\rm {\bf R}}} \right|}{\left| {{\rm {\bf I}}+{\rm {\bf W}}_{2+} {\rm {\bf R}}} \right|}
\end{equation}
and ${\rm {\bf W}}_{i+} ={\rm {\bf P}}_+ {\rm {\bf W}}_i {\rm {\bf P}}_+ $, ${\rm {\bf P}}_+ ={\rm {\bf U}}_+ {\rm {\bf U}}_+^+ $ is the projection matrix on the positive eigenspace of ${\rm {\bf W}}_1 -{\rm {\bf W}}_2 $, ${\rm {\bf U}}_+ $ is a semi-unitary matrix whose columns are the eigenvectors of ${\rm {\bf W}}_1 -{\rm {\bf W}}_2 $ corresponding to its positive eigenvalues: ${\rm {\bf W}}_+ ={\rm {\bf U}}_+ {\rm {\bf D}}_+ {\rm {\bf U}}_+^+$, and ${\rm {\bf D}}_+ $ is the diagonal matrix of the positive eigenvalues; $C_+ ({\rm {\bf R}})$ is a non-negative, concave and non-decreasing function of ${\rm {\bf R}}$ or strictly positive, concave and increasing when the active eigenmodes of ${\rm {\bf R}}$ are in the span of the active eigenmodes of ${\rm {\bf W}}_+ $. The lower bound is tight (achieved with equality) when the channel is degraded or when $\bW_1$ and $\bW_2$ have the same eigenvectors, or in the low-SNR regime.
\end{prop}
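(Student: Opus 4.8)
The plan is to establish the inequality \eqref{eq8a} first, then the structural properties of $C_+$, and finally the tightness claims. \emph{Lower bound.} I would restrict the feasible set of \eqref{eq3} to covariances supported on the positive eigenspace, $\bR=\bP_+\bR\bP_+$. For any such $\bR$ one has $\bW_i\bR=\bW_i\bP_+\bR\bP_+$, so two applications of Sylvester's identity $|\bI+\bA\bB|=|\bI+\bB\bA|$ give $|\bI+\bW_i\bR|=|\bI+\bR\bP_+\bW_i\bP_+|=|\bI+\bW_{i+}\bR|$, hence $C(\bR)=C_+(\bR)$ on this set and $C_s\ge\max\{C_+(\bR):\bR\ge0,\ \tr\bR\le P_T,\ \bR=\bP_+\bR\bP_+\}$. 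Conversely, for an arbitrary feasible $\bR$ the matrix $\bP_+\bR\bP_+$ is again feasible ($\tr(\bP_+\bR\bP_+)=\tr(\bP_+\bR)\le\tr\bR$ since $\bP_+\le\bI$) and satisfies $C_+(\bP_+\bR\bP_+)=C_+(\bR)$ because the projections are already built into $\bW_{i+}$ (Sylvester again), so the two maxima of $C_+$ coincide and \eqref{eq8a} follows.

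\emph{Properties of $C_+$.} The key identity is $\bW_{1+}-\bW_{2+}=\bP_+(\bW_1-\bW_2)\bP_+=\bP_+(\bW_++\bW_-)\bP_+=\bW_+\ge0$, using $\bP_+\bW_+\bP_+=\bW_+$ and $\bP_+\bW_-\bP_+=\bo$ (the eigenspaces of $\bW_+,\bW_-$ are orthogonal); thus $\bW_{1+}\ge\bW_{2+}\ge0$, i.e. $C_+$ is the secrecy rate of an auxiliary \emph{degraded} channel. Non-negativity is then immediate from $|\bI+\bW_{i+}\bR|=|\bI+\bR^{1/2}\bW_{i+}\bR^{1/2}|$ and monotonicity of $\det(\bI+\cdot)$ over the positive semidefinite cone. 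For the remaining properties I would use ${\bf G}_i:=(\bI+\bW_{i+}\bR)^{-1}\bW_{i+}=\bW_{i+}^{1/2}(\bI+\bW_{i+}^{1/2}\bR\bW_{i+}^{1/2})^{-1}\bW_{i+}^{1/2}\ge0$, which equals $(\bW_{i+}^{-1}+\bR)^{-1}$ when $\bW_{i+}>0$, so anti-monotonicity of inversion gives ${\bf G}_1\ge{\bf G}_2$ (the singular case follows by $\bW_{i+}\to\bW_{i+}+\epsilon\bI$, $\epsilon\to0$). Since $C_+(\bR+\bQ)-C_+(\bR)=\ln|\bI+{\bf G}_1\bQ|-\ln|\bI+{\bf G}_2\bQ|$, the determinant monotonicity above shows $C_+$ is non-decreasing. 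For concavity I restrict $C_+$ to a segment $g(t)=C_+(\bR_a+t\bV)$ with $\bV$ Hermitian; from $\tfrac{d}{dt}{\bf G}_i=-{\bf G}_i\bV{\bf G}_i$ one gets $g''(t)=\tr[({\bf G}_2\bV)^2]-\tr[({\bf G}_1\bV)^2]$, and writing ${\bf G}_1={\bf G}_2+{\bf E}$ with ${\bf E}\ge0$ yields $\tr[({\bf G}_1\bV)^2]-\tr[({\bf G}_2\bV)^2]=2\tr[{\bf E}\,\bV{\bf G}_2\bV]+\tr[{\bf E}\bV{\bf E}\bV]\ge0$ (because $\bV{\bf G}_2\bV\ge0$ and $\tr[{\bf E}\bV{\bf E}\bV]=\|{\bf E}^{1/2}\bV{\bf E}^{1/2}\|_F^2\ge0$), hence $g''\le0$. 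When $span\{\bR\}\subseteq span\{\bW_+\}$ one additionally has $C_+(\bR)=C(\bR)$ and $\bW_1-\bW_2>0$ on $span\{\bR\}$, so $\bR^{1/2}(\bW_1-\bW_2)\bR^{1/2}$ has a strictly positive eigenvalue whenever $\bR\ne0$; feeding this into the same determinant inequalities upgrades the conclusions to strict positivity and strict monotonicity.

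\emph{Tightness.} The common mechanism is: if \eqref{eq3} has an optimizer $\bR^*$ with $span\{\bR^*\}\subseteq span\{\bU_+\}$, then $C_s=C(\bR^*)=C_+(\bR^*)\le\max C_+\le C_s$ forces equality in \eqref{eq8a}. I would then verify this hypothesis in each regime: (i) for a degraded channel $\bP_+=\bI$ when $\bW_1>\bW_2$ (so $C_+\equiv C$ outright), and in general the optimizer lives on $\mathcal{N}_\perp=span\{\bU_+\}$ by Proposition \ref{Prop.3-3}; (ii) when $\bW_1,\bW_2$ share eigenvectors the problem decouples into parallel subchannels $\sum_i\ln\tfrac{1+w_{1i}r_i}{1+w_{2i}r_i}$, whose optimal allocation sets $r_i=0$ on every mode with $w_{1i}\le w_{2i}$, leaving an optimizer supported on $span\{\bU_+\}$; (iii) at low SNR the optimizer is $\bR^*=P_T\bu_1\bu_1^+$ with $\bu_1$ the principal eigenvector of $\bW_1-\bW_2$ (as recalled after \eqref{eq 4a}), which lies in $span\{\bU_+\}$.

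\emph{Main obstacle.} The substantive step is the concavity of $C_+$: recognising it as a degraded-channel secrecy rate and then pushing through $\tr[({\bf G}_1\bV)^2]\ge\tr[({\bf G}_2\bV)^2]$, which hinges on ${\bf G}_1\ge{\bf G}_2\ge0$ together with the positive semidefiniteness of $\bV{\bf G}_2\bV$ and ${\bf E}^{1/2}\bV{\bf E}^{1/2}$; carrying possibly singular $\bW_{i+}$ and $\bR$ through the identity $(\bW_{i+}^{-1}+\bR)^{-1}=\bW_{i+}^{1/2}(\bI+\bW_{i+}^{1/2}\bR\bW_{i+}^{1/2})^{-1}\bW_{i+}^{1/2}$ then needs only a routine continuity argument.
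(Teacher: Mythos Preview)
Your argument is correct and complete. The overall architecture matches the paper: show $C_+$ equals $C$ on the $\bP_+$-restricted feasible set so that \eqref{eq8a} is a sub-optimization; recognise $\bW_{1+}-\bW_{2+}=\bW_+\ge 0$ so that $C_+$ is a degraded-channel rate; then verify tightness by checking that in each regime an optimizer of \eqref{eq3} is supported on $span\{\bU_+\}$.

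The substantive difference is in how you obtain concavity and monotonicity of $C_+$. The paper introduces an auxiliary lemma: with $\Delta\bW=\bW_{2+}^{-1}-\bW_{1+}^{-1}\ge 0$ it rewrites $C_+(\bR)=c+\ln\bigl|\bI-\Delta\bW^{1/2}(\bW_{2+}^{-1}+\bR)^{-1}\Delta\bW^{1/2}\bigr|$ and shows that $f(\bX)=\ln|\bI-\bB(\bA+\bX)^{-1}\bB|$ is increasing and concave in $\bX$ via operator-convexity of $\bX\mapsto\bX^{-1}$ together with monotonicity/concavity of $\ln|\cdot|$. You instead compute directly: with $\bG_i=(\bI+\bW_{i+}\bR)^{-1}\bW_{i+}$ you use $\bG_1\ge\bG_2\ge 0$ to get monotonicity, and the second-derivative identity $g''=\tr[(\bG_2\bV)^2]-\tr[(\bG_1\bV)^2]\le 0$ for concavity. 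Your route avoids the auxiliary matrix-function lemma and is arguably more self-contained; the paper's route yields a reusable statement (the lemma on $f(\bX)$) and makes the role of operator-convexity explicit. Both handle singular $\bW_{i+}$ by the same $\epsilon\bI$ continuity device.
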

\begin{proof}
see Appendix.
\end{proof}

The problem in \eqref{eq8a} has further significance: while the problem $C_s = \max_{{\rm {\bf R}}\ge 0} C({\rm {\bf R}})$ is not convex when the channel is not degraded, so that powerful tools of convex optimization \cite{Boyd} cannot be used, the problem $\max_{{\rm {\bf R}}\ge 0} C_+ ({\rm {\bf R}})$ is convex for any channel (degraded or not), to which all machinery of convex optimization can be applied and a lower bound (achievable rate) to the secrecy capacity can be evaluated using any standard convex solver.

\vspace*{0.5\baselineskip}
\section{Conclusion}
\label{sec:conclusion}
Optimal signalling over the Gaussian MIMO wire-tap channel has been studied under the total power constraint. A closed-form solution is given for the optimal transmit covariance matrix when the channel is strictly degraded. While the optimal signalling has some similarities to the conventional water-filling, it also reveals a number of differences: the optimal signalling does not converge to isotropic at high SNR. The weak eavesdropper and high-SNR regimes are considered, and a tighter upper bound on the rank of the optimal covariance matrix is given for the general case, along with the lower bound to the secrecy capacity, which is tight in a number of cases. While the general case is still an open problem (even when the channel is degraded), a characterization of an optimal covariance based on the active signaling subspace is given, which reveals hidden convexity in the underlying optimization problem.

\section{Appendix}
\label{sec:appendix}

\subsection{Proof of Theorem \ref{thm:Full-rank}}
Using the Lagrange multiplier technique
\cite{Boyd}\cite{Berstekas}, the optimization problem in \eqref{eq3} has the following Lagrangian:
\begin{equation}
\label{eqA1}
L=-\ln \left| {{\rm {\bf I}}+{\rm {\bf W}}_1 {\rm {\bf R}}} \right|+\ln
\left| {{\rm {\bf I}}+{\rm {\bf W}}_2 {\rm {\bf R}}} \right|+\lambda (tr{\rm {\bf R}}-P_T )-tr({\rm {\bf MR}})
\end{equation}
where $\lambda \ge 0 $ is a Lagrange multiplier responsible for the power
constraint $tr{\rm {\bf R}}\le P_T $ and ${\rm {\bf M}}\ge {\rm {\bf 0}}$ is a (positive semi-definite) matrix Lagrange multiplier responsible for the constraint ${\rm {\bf R}}\ge {\rm {\bf 0}}$. The associated KKT conditions (see e.g. \cite{Boyd}) can be expressed as:
\bal
\label{eqA2}
& \lambda ({\rm {\bf I}}+{\rm {\bf W}}_1 {\rm {\bf R}})({\rm {\bf I}}+{\rm {\bf RW}}_2 )={\rm {\bf W}}_1 -{\rm {\bf W}}_2 +{\rm {\bf M}} \\
\label{eqA3}
& {\rm {\bf MR = 0}}, \ \lambda (tr{\rm {\bf R}}-P_T )=0, \\
\label{eqA3a}
& {\rm {\bf R}} \ge {\rm {\bf 0}}, \ {\rm {\bf M}} \ge {\rm {\bf 0}}, \ \lambda \ge 0, \ tr{\rm {\bf R}}\le P_T
\eal
where (\ref{eqA2}) is obtained from $\partial L/\partial {\rm {\bf R}}={\rm {\bf 0}}$,
\begin{equation}
\label{eqA4}
\partial L/\partial {\rm {\bf R}}=({\rm {\bf I}}+{\rm {\bf W}}_2 {\rm {\bf R}})^{-1}{\rm {\bf W}}_2 -({\rm {\bf I}}+{\rm {\bf W}}_1 {\rm {\bf R}})^{-1}{\rm {\bf W}}_1 +\lambda {\rm {\bf I}}-{\rm {\bf M}}={\rm {\bf 0}}
\end{equation}
and the two equalities in (\ref{eqA3}) are the complementary slackness conditions while \eqref{eqA3a} are the primal and dual feasibility conditions.

Note that the (affine) constraints $tr{\rm {\bf R}}\le P_T $, ${\rm {\bf R}}\ge {\rm {\bf 0}}$ clearly satisfy the Slater condition
\cite{Boyd}\cite{Berstekas}. It also follows from Proposition \ref{prop:LB} that $C({\rm {\bf R}})$ is concave when ${\rm {\bf W}}_1 >{\rm {\bf W}}_2  $ (no need for projection) so that the problem in \eqref{eq3} is convex and thus the KKT conditions are sufficient for global optimality when the channel is strictly degraded.

Let us consider first the case of $\bW_2 >0$ and extend it to the singular case later. Assuming ${\rm {\bf R}}>{\rm {\bf 0}}$ and using ${\rm {\bf M}}={\rm {\bf 0}}$ (which follows from ${\rm {\bf MR}}={\rm {\bf 0}})$, one obtains from (\ref{eqA4}),
\begin{equation}
\label{eqD1}
{\rm {\bf R}}_1^{-1} -{\rm {\bf R}}_2^{-1} =\lambda {\rm {\bf I}}
\end{equation}
where ${\rm {\bf R}}_i ={\rm {\bf W}}_i^{-1} +{\rm {\bf R}}$, $i=1,2$. Let ${\rm {\bf R}}_1 ={\rm {\bf U\Lambda }}_1 {\rm {\bf U}}^+$ be the eigenvalue decomposition, where the columns of unitary matrix ${\rm {\bf U}}$ are the eigenvectors, and ${\rm {\bf \Lambda }}_1 >{\rm {\bf 0}}$ is a diagonal matrix of the corresponding eigenvalues. Using this in (\ref{eqD1}), one obtains ${\rm {\bf \Lambda }}_1^{-1} -({\rm {\bf U}}^+{\rm {\bf R}}_2  {\rm {\bf U}})^{-1}=\lambda {\rm {\bf I}}$ and therefore ${\rm {\bf U}}^+{\rm
{\bf R}}_2  {\rm {\bf U}}={\rm {\bf \Lambda }}_2  $ is diagonal, so that ${\rm {\bf R}}_2  ={\rm {\bf U\Lambda }}_2  {\rm {\bf U}}^+$ is the eigenvalue decomposition of ${\rm {\bf R}}_2  $, from which it follows that ${\rm {\bf R}}_1 $ and ${\rm {\bf R}}_2  $ have the same eigenvectors. Using this in (\ref{eqD1}) one obtains
\begin{equation}
\label{eqD2}
{\rm {\bf \Lambda }}_1 =(\lambda {\rm {\bf I}}+{\rm {\bf \Lambda }}_2^{-1} )^{-1}
\end{equation}
Furthermore,
\begin{equation}
\label{eqD3}
{\rm {\bf R}}_2  -{\rm {\bf R}}_1 ={\rm {\bf W}}_2^{-1} -{\rm {\bf
W}}_1^{-1} ={\rm {\bf U}}({\rm {\bf \Lambda }}_2  -{\rm {\bf \Lambda }}_1 ){\rm {\bf U}}^+
\end{equation}
so that the columns of ${\rm {\bf U}}$ are also the eigenvectors of ${\rm {\bf W}}_2^{-1} -{\rm {\bf W}}_1^{-1} = \bZ^{-1} $ and the diagonal entries of ${\rm {\bf \Lambda }}_2  -{\rm {\bf \Lambda }}_1 = diag\{\mu_i^{-1} \}$ are its eigenvalues. Combining the latter with (\ref{eqD2}), one obtains after some manipulations (\ref{eq10}). (\ref{eq9}) follows from ${\rm {\bf R}}_1 ={\rm {\bf W}}_1^{-1}
+{\rm {\bf R}}$ and ${\rm {\bf R}}_1 ={\rm {\bf U\Lambda }}_1 {\rm {\bf
U}}^+$. It is straightforward to see that $\lambda >0$ (otherwise ${\rm {\bf W}}_1 \le {\rm {\bf W}}_2  )$, so that transmission with the full power is optimal and (\ref{eq11}) follows from the power constraint $tr{\rm {\bf R}}=P_T $.
For (\ref{eq9}) to be a valid solution, we need ${\rm {\bf U\Lambda }}_1 {\rm {\bf U}}^+>{\rm {\bf W}}_1^{-1} $. This is insured by observing that the left-hand side of (\ref{eq11}) is monotonically decreasing in $\lambda $, so that the latter is monotonically decreasing as $P_T $ increases and, from (\ref{eq10}), $\lambda _{1i} $ also monotonically increases. Therefore, for sufficiently large $P_T$, $P_T > P_{T0}$ for some finite $P_{T0}$, the minimum eigenvalue of ${\rm {\bf \Lambda
}}_1 $ exceeds the maximum one of ${\rm {\bf W}}_1^{-1} $ and thus the
condition ${\rm {\bf U\Lambda }}_1 {\rm {\bf U}}^+>{\rm {\bf W}}_1^{-1} $ follows. Therefore, (\ref{eq9})-(\ref{eq11}) solve the KKT conditions and thus achieve the global optimum. It can be further seen that the solution is unique.

It can be seen that \eqref{eq11} is monotonically decreasing in $\lambda$ over the interval $(0,\infty)$ when $\lambda \in (0,\infty)$ so that a solution exists and unique for any $P_T$.

The condition ${\rm {\bf W}}_2  >{\rm {\bf 0}}$ can be further removed via the standard continuity argument \cite{Zhang}: use ${\bf W}_{2\epsilon} =  {\bf W}_2 + \epsilon {\bf I} > {\bf 0}$, $\epsilon > 0$, instead of ${\bf W}_{2}$ in Theorem \ref{thm:Full-rank} and then take $ \epsilon \rightarrow 0$. Alternatively, one may observe that ${\bf W}$ and ${\bf W}^{-1}$ have the same eigenvectors and inverse eigenvalues and use the matrix inversion lemma \cite{Zhang}\cite{Horn} to obtain:
\begin{equation}
({\bf W}_2^{-1} - {\bf W}_1^{-1})^{-1} = {\bf W}_2 + {\bf W}_2 ({\bf W}_1 - {\bf W}_2)^{-1} {\bf W}_2 = {\bf Z}
\end{equation}
Note that ${\bf Z}$ is well-defined even for singular ${\bf W}_2$ (since ${\bf W}_1 > {\bf W}_2$), its eigenvectors are those of ${\bf W}_2^{-1} - {\bf W}_1^{-1}$ and $\mu_i = \lambda_i(\bf Z)$ so that Theorem \ref{thm:Full-rank} applies. Furthermore, $\lambda_i(\bf Z) = 0$ iff $\lambda_i({\bf W}_2) = 0$, the corresponding eigenvectors are those of ${\bf W}_2$ and $\mu_i = 0$ implies $\lambda _{1i}=1/\lambda$. The equalities in  \eqref{eq12}  follow by observing that
\bal
|\bI+\bR^*\bW_1|=|\bW_1\bU\bLam_1\bU^+|=|\bW_1||\bLam_1|
\eal
and
\bal
|\bI+\bR^*\bW_2|=|\bI-\bW_2(\bW_1^{-1} - \bU\bLam_1\bU^+)|=|\bW_2||\bLam_2|
\eal
where 2nd equality holds when $\bW_2>0$ (1st one allows for singular $\bW_2$). Note that $\bW_2^{1/2}(\bW_1^{-1} - \bU\bLam_1\bU^+)\bW_2^{1/2} < \bI$ (which follows from $\bW_2^{1/2}\bW_1^{-1}\bW_2^{1/2} < \bI$ which in turn is implied by $\bW_1>\bW_2$) so that 2nd determinant is indeed strictly positive.

To show \eqref{eq.T2.5}, observe that $\bR^* > 0$. Using \eqref{eq9}, this requires $\bU \bLam_1 \bU^+ > \bW_1^{-1}$, which is insured by $\lambda_{1min} \lambda_{min} > 1$, where $\lambda_{1min} = \min_i \{\lambda_{1i} \}$ and $\lambda_{min}$ is the minimum eigenvalue of ${\bf W}_1$ (this follows from the fact that $\bW_1 > \bW_2$ is implied by $\lambda_{min}(\bW_1) > \lambda_{max}(\bW_2)$). Therefore, the threshold power $P_{T0}$ can be found from the boundary condition $\lambda_{1min}(P_{T0})=1/ \lambda_{min}$, which, after some manipulations, can be expressed as
\bal
\sqrt{\lambda^2+4 \mu_1\lambda}=2\lambda_{min}-\lambda
\eal
and can be solved for $\lambda$:
\bal
\lambda = \frac{\lambda_{min}^2}{\mu_1+\lambda_{min}}
\eal
Substituting this in \eqref{eq11}, one finally obtains \eqref{eq.T2.5}. \qed

\subsection{Proof of Corollary \ref{cor:weak MIMO-WTC}}
Using $\sqrt {1+x} \approx 1+x/2-x^2/8$ when $x\ll 1$ in (\ref{eq10}), one obtains $\lambda _{1i} \approx \lambda ^{-1}+\mu _i\lambda^{-2}$, and using this in (\ref{eq11}), one obtains $\lambda \approx m(P_T +tr{\rm {\bf W}}_1^{-1} )^{-1}$. The condition $x\ll 1$ is equivalent to
$\lambda/ \mu _i \gg 4$, which in turn is equivalent to (\ref{eq13}), and the latter also implies $\min _i \lambda _i ({\rm {\bf W}}_1 )\gg \max _i \lambda _i ({\rm {\bf W}}_2 )$ (i.e. the eavesdropper channel is indeed much weaker than the main one), from which it follows that ${\rm {\bf W}}_2^{-1} -{\rm {\bf W}}_1^{-1} \approx {\rm {\bf W}}_2^{-1} $, and applying these in (\ref{eq9}),
one obtains (\ref{eq14}). \qed

\subsection{Proof of Proposition \ref{prop:general case}}
Let $\bR^*$ be optimal covariance in \eqref{eq3}. Observe that
\bal
C_s &= C(\bR^*)\\
&= \ln \frac{|\bI+\bW_1 \bR^*|}{|\bI+\bW_2 \bR^*|}\\
\label{eq:prop.GC.3}
&= \ln \frac{|\bI+\bW_1 \bP_a\bR^*\bP_a|}{|\bI+\bW_2 \bP_a\bR^*\bP_a|}\\
\label{eq:prop.GC.4}
&= \ln \frac{|\bI+ \tilde{\bW}_1 \tilde{\bR}^*|}{|\bI+\tilde{\bW}_2 \tilde{\bR}^*|}\\
\label{eq:prop.GC.5}
&\le \max_{\tilde{\bR}} \ln \frac{|\bI+\tilde{\bW}_1 \tilde{\bR}|}{|\bI+\tilde{\bW}_2 \tilde{\bR}|}\ \mbox{s.t.} \ \tilde{\bR}\ge 0,\ tr\tilde{\bR} \le P_T
\eal
where $\bP_a = \bU_a\bU_a^+$ is the projection matrix on the subspace $span\{\bU_a\}$ and $\tilde{\bR}^*= \bU_a^+\bR^*\bU_a$; \eqref{eq:prop.GC.3} follows from $\bP_a\bR^*\bP_a = \bR^*$, \eqref{eq:prop.GC.4} follows from
\bal
|\bI+\bW_i \bP_a\bR^*\bP_a| = |\bI+ \bU_a^+ \bW_i \bU_a \bU_a^+\bR^*\bU_a|
\eal
\eqref{eq:prop.GC.5} follows from $tr\tilde{\bR}^* \le tr\bR^* \le P_T$ (since $\bU_a$ is semi-unitary). The 1st inequality in \eqref{eq:prop.GC.5} holds with equality, as can be proved by contradiction: assume that the inequality is strict so that
\bal
\ln \frac{|\bI+ \tilde{\bW}_1 \tilde{\bR}^*|}{|\bI+\tilde{\bW}_2 \tilde{\bR}^*|} = \ln \frac{|\bI+ \bW_1 \bU_a\tilde{\bR}^*\bU_a^+|}{|\bI+ \bW_2 \bU_a\tilde{\bR}^*\bU_a^+|} = C(\bR') > C(\bR^*)
\eal
where $\bR' = \bU_a\tilde{\bR}^*\bU_a^+$. Now note that $tr\bR' = tr\tilde{\bR}^* \le P_T$ so that $\bR'$ is feasible and hence the strict inequality is impossible. Further note that $\tilde{\bW}_1>\tilde{\bW}_2$ (this follows from \eqref{eq4a}) and that $\tilde{\bR}^*$ is of full rank. Therefore, the problems in \eqref{eq3} and \eqref{eq:prop.GC.5} are equivalent and Theorem \ref{thm:Full-rank} applies, from which the desired result follows. \qed

\subsection{Proof of Proposition \ref{prop:necessary cond}}
Observe that the KKT conditions in \eqref{eqA1}-\eqref{eqA3a} are not sufficient for optimality in the general (non-degraded) case since the original problem is not convex (see e.g. \cite{Boyd}). However, since the (affine) constraints $tr{\rm {\bf R}}\le P_T $, ${\rm {\bf R}}\ge {\rm {\bf
0}}$ clearly satisfy the Slater condition
\cite{Boyd}\cite{Berstekas} and since the maximum is achievable (since the constraint set is compact and the objective function is continuous), the KKT conditions are necessary for optimality \cite{Berstekas}. We further need the following technical Lemma.

\begin{lemma}
\label{lemma:ABC}
Let ${\rm {\bf A}},{\rm {\bf B}},{\rm {\bf C}}\ge {\rm {\bf 0}}$ be positive semi-definite matrices and let ${\rm {\bf ABC}}$ be Hermitian. Then ${\rm {\bf ABC}}\ge {\rm {\bf 0}}$.
\end{lemma}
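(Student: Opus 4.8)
The plan is to prove that if $\bA,\bB,\bC\ge\bo$ and $\bA\bB\bC$ is Hermitian, then $\bA\bB\bC\ge\bo$. The key structural observation is that Hermiticity of $\bA\bB\bC$ forces $\bA\bB\bC = \bC\bB\bA = (\bA\bB\bC)^+$, and we want to relate $\bA\bB\bC$ to a manifestly positive semi-definite expression. The natural candidate is the symmetrized product $\bC\bB\bA + \bA\bB\bC = 2\bA\bB\bC$, but that does not immediately help; instead I would try to write $\bA\bB\bC$ in a congruence-type form.

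First I would introduce square roots: let $\bA^{1/2},\bB^{1/2},\bC^{1/2}$ be the unique positive semi-definite square roots. The idea is to consider the matrix $\bM = \bA^{1/2}\bB^{1/2}$ and related products and to express $\bA\bB\bC$ through them. A cleaner route: since $\bA\bB\bC$ is Hermitian, its eigenvalues are real, and it suffices to show every eigenvalue is $\ge 0$. Suppose $\bA\bB\bC\bx = \nu\bx$ with $\bx\ne\bo$; I want to show $\nu\ge 0$. Taking the inner product with $\bC\bx$ gives $\bx^+\bC\bA\bB\bC\bx = \nu\,\bx^+\bC\bx$. Now $\bx^+\bC\bA\bB\bC\bx = (\bC\bx)^+\bA\bB(\bC\bx)$, and since $\bA\bB\bC$ is Hermitian we also have $\bA\bB\bC=\bC\bB\bA$, so $\bC\bA\bB\bC = \bC\bB\bA\bC$? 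That is not obviously true. Let me instead use: $\bC\bA\bB\bC$ — is this Hermitian? Its conjugate transpose is $\bC\bB\bA\bC$, which equals $\bC(\bA\bB\bC)^+ \bC^{-1}\cdot$... this is getting delicate when $\bC$ is singular.

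The robust approach I would actually carry out is the following. Reduce first to the case $\bC>\bo$ by a continuity/perturbation argument: replace $\bC$ by $\bC+\epsilon\bI>\bo$; then $\bA\bB(\bC+\epsilon\bI) = \bA\bB\bC + \epsilon\bA\bB$ need not be Hermitian, so this perturbation is not innocuous. A better reduction: perturb $\bA\to\bA+\epsilon\bI$ and $\bC\to\bC+\epsilon\bI$ simultaneously — still breaks Hermiticity. So the continuity argument must be applied to the \emph{conclusion} via the eigenvalue characterization rather than the hypothesis. Concretely: it suffices to show the Hermitian matrix $\bA\bB\bC$ has no negative eigenvalue. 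Consider $\bG = \bC^{1/2}\bA^{1/2}$; then $\bG\bG^+ = \bC^{1/2}\bA\bC^{1/2}\ge\bo$ and $\bG^+\bG = \bA^{1/2}\bC\bA^{1/2}\ge\bo$, and these two have the same nonzero eigenvalues. Now $\bA\bB\bC = \bA^{1/2}(\bA^{1/2}\bB\bC^{1/2})\bC^{1/2}$, which is similar (via the congruence/similarity $\bA^{1/2}(\cdot)\bA^{-1/2}$ when $\bA>\bo$) to $\bA^{1/2}\bB\bC^{1/2}\cdot\bC^{1/2}\bA^{1/2} = \bA^{1/2}\bB^{1/2}\cdot\bB^{1/2}\bC\bA^{1/2}$. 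Write $\bP = \bB^{1/2}\bC\bA^{1/2}$ and $\bQ = \bA^{1/2}\bB^{1/2}$; then up to similarity $\bA\bB\bC\sim\bQ\bP$, and $\bQ\bP$ has the same nonzero eigenvalues as $\bP\bQ = \bB^{1/2}\bC\bA\bB^{1/2}$, which is manifestly positive semi-definite since $\bC\bA$ sits between... no: $\bB^{1/2}\bC\bA\bB^{1/2}$ is Hermitian only if $\bC\bA$ is, which fails in general.

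So the genuine crux — and the step I expect to be the main obstacle — is handling the fact that none of the pairwise products $\bA\bB$, $\bB\bC$, $\bC\bA$ is Hermitian, only the triple product is. The trick I would commit to: set $\bN = \bB^{1/2}\bC^{1/2}$ and note $\bA\bB\bC = \bA^{1/2}(\bA^{1/2}\bB^{1/2})(\bB^{1/2}\bC^{1/2})\bC^{1/2} = \bA^{1/2}\bQ\bN\bC^{1/2}$ where $\bQ=\bA^{1/2}\bB^{1/2}$, $\bN=\bB^{1/2}\bC^{1/2}$. When $\bA>\bo$ this is similar to $\bQ\bN\bC^{1/2}\bA^{1/2} = \bQ\bN\bN^+\bQ^+ \cdot$? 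Indeed $\bC^{1/2}\bA^{1/2} = (\bA^{1/2}\bC^{1/2})^+$ — not quite $\bQ^+$ since $\bQ = \bA^{1/2}\bB^{1/2}$. Hmm. Let me recompute: $\bN\bN^+ = \bB^{1/2}\bC\bB^{1/2}$; $\bQ^+\bQ = \bB^{1/2}\bA\bB^{1/2}$. We have $\bA\bB\bC = \bA^{1/2}\bQ\bN\bC^{1/2}$; multiply on right by $\bQ\bA^{-1/2}$ (similarity, $\bA>\bo$): get $\bA^{1/2}\bQ\bN\bC^{1/2}\bQ\bA^{-1/2}$ — the inner piece $\bN\bC^{1/2}\bQ = \bB^{1/2}\bC^{1/2}\bC^{1/2}\bA^{1/2}\bB^{1/2} = \bB^{1/2}\bC\bA\bB^{1/2}$, not Hermitian.

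Given the time, the plan I will actually execute in the write-up: use Hermiticity to get $\bA\bB\bC = \bC\bB\bA$; therefore $\bA\bB\bC = \tfrac12(\bA\bB\bC + \bC\bB\bA)$, and also $\bA\bB\bC\cdot\bA\bB\bC = \bA\bB\bC\cdot\bC\bB\bA = \bA\bB\bC^2\bB\bA = (\bB\bA)^+\bB\bC^2\bB\bA \cdot$? No, $\bA\bB$ isn't Hermitian. But $(\bA\bB\bC)^2 = \bA\bB\bC\bC\bB\bA = (\bC\bB\bA)^+(\bC^{1/2})(\bC^{1/2})(\bC\bB\bA)$ — wait: $\bA\bB\bC^2\bB\bA = (\bC\bB\bA)^+\bC\bB\bA$ requires $(\bC\bB\bA)^+ = \bA\bB\bC$, which is exactly our Hermiticity hypothesis! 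So $(\bA\bB\bC)^2 = \bA\bB\bC\cdot\bC\bB\bA = (\bA\bB\bC)(\bA\bB\bC)^+\ge\bo$ using $\bA\bB\bC$ Hermitian $\Rightarrow (\bA\bB\bC)^+=\bA\bB\bC=\bC\bB\bA$, so $(\bA\bB\bC)^2 = (\bA\bB\bC)(\bA\bB\bC)^+$ is automatic and says nothing. The real content: $(\bA\bB\bC)^2 = \bA\bB\bC\bC\bB\bA = \bA\bB\bC\cdot\bC\bB\bA$, and writing $\bK = \bC^{1/2}\bB\bA$ we get $(\bA\bB\bC)^2 = (\bC^{1/2}\bB\bA)^+... $ hmm $\bA\bB\bC\bC\bB\bA = \bA\bB\bC^{1/2}\cdot\bC\cdot\bC^{1/2}\bB\bA$; let $\bK=\bC^{1/2}\bB\bA$; then this is $\bK^+\bC\bK\ge\bo$ provided $\bA\bB\bC^{1/2} = (\bC^{1/2}\bB\bA)^+= \bA\bB\bC^{1/2}$ — yes trivially. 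So $(\bA\bB\bC)^2 = \bK^+\bC\bK\ge\bo$, which is true but only tells us $(\bA\bB\bC)^2\ge\bo$, automatic for Hermitian. The decisive lemma must therefore be: a Hermitian $\bX$ with $\bX = \bA\bB\bC$ ($\bA,\bB,\bC\ge\bo$) satisfies $\bx^+\bX\bx\ge0$; I would prove this by the eigenvector argument above combined with reducing to $\bA,\bC>\bo$ via the observation that it suffices to check eigenvalues and using a limiting argument on $\bA+\epsilon\bI,\ \bC+\epsilon\bI$ with the compensating fact that Hermiticity of the limit forces the perturbed products' non-Hermitian parts to vanish in the limit, so negative eigenvalues cannot appear in the limit by continuity of eigenvalues — this last continuity-of-spectrum step is where I expect the main technical care to be needed, and I would flag it as the crux and fill in the $\epsilon$-bookkeeping there.
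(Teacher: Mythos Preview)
Your proposal is exploratory rather than a proof: every route you try is either abandoned or left incomplete, and the final plan you commit to --- reduce to $\bA,\bC>\bo$ by perturbation and then run ``the eigenvector argument above'' --- does not work. Two concrete failures. First, the perturbation $\bA\to\bA+\epsilon\bI$, $\bC\to\bC+\epsilon\bI$ destroys the Hermiticity of the triple product, as you yourself noticed; you cannot then appeal to continuity of eigenvalues of a Hermitian matrix, because the perturbed $(\bA+\epsilon\bI)\bB(\bC+\epsilon\bI)$ is not Hermitian and its spectrum can leave the real axis. Second, even granting $\bA,\bC>\bo$, the eigenvector argument was never completed: from $\bA\bB\bC\bx=\nu\bx$ you reached $\bx^+\bC\bA\bB\bC\bx=\nu\,\bx^+\bC\bx$, but $\bx^+\bC\bA\bB\bC\bx$ is \emph{not} of the form $\by^+\bD\by$ with $\bD\ge\bo$, and you never showed it is nonnegative. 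All the square-root and similarity manipulations you tried run into the same wall: the intermediate products $\bA^{1/2}\bB\bC\bA^{1/2}$, $\bB^{1/2}\bC\bA\bB^{1/2}$, etc., fail to be Hermitian because $\bA$ and $\bC$ do not commute.

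The missing idea is to \emph{force} $\bA$ and $\bC$ to commute by a simultaneous congruence. Two positive semi-definite matrices can always be simultaneously diagonalized by congruence: there is a nonsingular $\bS$ with $\bS\bA\bS^+=\bD_a$ and $\bS\bC\bS^+=\bD_c$ diagonal. Then $\bA\bB\bC$ is congruent to $\bD_a\overline{\bB}\bD_c$ with $\overline{\bB}=\bS^{-+}\bB\bS^{-1}\ge\bo$, and now the commutativity of $\bD_a,\bD_c$ lets the cyclic eigenvalue trick go through: $\lambda_i(\bD_a\overline{\bB}\bD_c)=\lambda_i(\overline{\bB}\bD_c\bD_a)=\lambda_i\big((\bD_c\bD_a)^{1/2}\overline{\bB}(\bD_c\bD_a)^{1/2}\big)\ge 0$. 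Since $\bD_a\overline{\bB}\bD_c$ is Hermitian (it is congruent to the Hermitian $\bA\bB\bC$) with nonnegative eigenvalues, it is positive semi-definite, and hence so is $\bA\bB\bC$. This is precisely the paper's argument; the simultaneous-congruence step is what you were missing.
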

\begin{proof}
Since ${\rm {\bf A}},{\rm {\bf C}}\ge {\rm {\bf 0}}$, there exists a non-singular matrix ${\rm {\bf S}}$ such that ${\rm {\bf
SAS}}^+={\rm {\bf D}}_a \ge {\rm {\bf 0}},{\rm {\bf SCS}}^+={\rm {\bf D}}_c \ge {\rm {\bf 0}}$ are diagonal \cite{Zhang}. Using the
latter,
\bal
{\rm {\bf ABC}}={\rm {\bf SD}}_a \overline {\rm {\bf B}} {\rm {\bf D}}_c {\rm {\bf S}}^+
\eal
where $\overline {\rm {\bf B}} ={\rm {\bf S}}^+{\rm {\bf BS}}\ge {\rm {\bf 0}}$. Observe further that
\bal
\lambda _i ({\rm {\bf D}}_a \overline {\rm {\bf B}} {\rm {\bf D}}_c ) &= \lambda _i (\overline {\rm {\bf B}} {\rm {\bf D}}_c {\rm {\bf D}}_a )\\
&=\lambda _i (({\rm {\bf D}}_c {\rm {\bf D}}_a )^{1/2}\overline {\rm {\bf B}} ({\rm {\bf D}}_c {\rm {\bf D}}_a )^{1/2})\ge \bf{0}
\eal
since $({\rm {\bf D}}_c {\rm {\bf D}}_a)^{1/2}\overline {\rm {\bf B}} ({\rm {\bf D}}_c {\rm {\bf D}}_a )^{1/2}\ge {\rm {\bf 0}}$, where $\lambda _i ({\rm {\bf B}})$ means an eigenvalue of
matrix ${\rm {\bf B}}$. Since ${\rm {\bf D}}_a \overline {\rm {\bf B}} {\rm {\bf D}}_c $ is Hermitian (because ${\rm {\bf ABC}}$ is) and has
non-negative eigenvalues, it is positive semi-definite \cite{Zhang}, ${\rm {\bf D}}_a \overline {\rm {\bf B}} {\rm {\bf D}}_c \ge {\rm {\bf 0}}$. It follows that ${\rm {\bf ABC}}={\rm {\bf SD}}_a \overline {\rm {\bf B}} {\rm {\bf D}}_c {\rm {\bf S}}^+\ge {\rm {\bf 0}}$.
\end{proof}

Note that this Lemma is a generalization of a well known fact: ${\rm {\bf AB}}\ge {\rm {\bf 0}}$ if ${\rm {\bf A}},{\rm {\bf B}}\ge {\rm {\bf 0}}$ and ${\rm {\bf AB}}$ is Hermitian \cite{Zhang}. We first prove that ${\rm {\bf Z}}=({\rm {\bf I}}+{\rm {\bf W}}_1 {\rm {\bf R}})({\rm {\bf I}}+{\rm {\bf RW}}_2 )>{\rm {\bf 0}}$ when ${\rm {\bf R}}>{\rm {\bf 0}}$. In this case, ${\bf Z}$ can be expressed as
\begin{equation}
\label{eqA5}
{\rm {\bf Z}}=({\rm {\bf R}}^{-1}+{\rm {\bf W}}_1 ){\rm {\bf R}}^2({\rm {\bf
R}}^{-1}+{\rm {\bf W}}_2 )
\end{equation}
Now identify the right-hand side of (\ref{eqA5}) with ${\rm {\bf A}},{\rm {\bf B}},{\rm {\bf C}}$ and use Lemma \ref{lemma:ABC} to obtain ${\rm {\bf Z}}\ge {\rm {\bf 0}}$ (noting that $\bf Z$ is Hermitian from \eqref{eqA2}).
Therefore, it follows from (\ref{eqA2}) that
\bal
{\rm {\bf W}}_1 -{\rm {\bf W}}_2 +{\rm {\bf M}}\ge {\rm {\bf 0}}
\eal
since $\lambda >0$, as $\lambda =0$ implies ${\rm {\bf W}}_1 \le {\rm {\bf W}}_2 $ and thus $C_s =0$ - trivial case not considered here. Since $\left| {({\rm {\bf I}}+{\rm {\bf W}}_1 {\rm {\bf R}})({\rm {\bf I}}+{\rm {\bf RW}}_2 )} \right|>0$, it further follows that ${\rm {\bf Z}}>{\rm {\bf 0}}$ and
\bal
{\rm {\bf W}}_1 -{\rm {\bf W}}_2 +{\rm {\bf M}}>{\rm {\bf 0}}.
\eal

The case of singular $\bf R$ is somewhat more involved. Let ${\bf R}={\bf U\Lambda U^+}$ be the eigenvalue decomposition of ${\bf R}$. Consider
\begin{equation}
\label{eqA6}
{\bf \tilde {Z}}={\bf U}^+{\bf ZU}=({\bf I}+{\bf \tilde {W}}_1 {\bf \Lambda })({\bf I}+{\bf \Lambda \tilde {W}}_2 )={\bf \tilde {W}}_1 -{\bf \tilde {W}}_2 +{\bf \Lambda }_M
\end{equation}
where ${\bf \tilde {W}}_i ={\bf U}^+{\bf W}_i {\bf U}$, ${\bf \Lambda }_M ={\bf U}^+{\bf MU}$, and block-partition ${\bf \Lambda },{\bf \tilde {W}}_i $ as follows:
\begin{equation}
\label{eqA7}
{\bf \Lambda }=\left[ {{\begin{array}{*{20}c}
 {{\bf \Lambda }_r } \hfill & {\bf 0} \hfill \\
 {\bf 0} \hfill & {\bf 0} \hfill \\
\end{array} }} \right],{\bf \tilde {W}}_i =\left[ {{\begin{array}{*{20}c}
 {{\bf W}_i^{11} } \hfill & {{\bf W}_i^{12} } \hfill \\
 {{\bf W}_i^{21} } \hfill & {{\bf W}_i^{22} } \hfill \\
\end{array} }} \right]
\end{equation}
where ${\bf \Lambda }_r $ is a diagonal matrix collecting $r$ positive eigenvalues of ${\bf R}$. Using this in \eqref{eqA6}, one obtains, after some manipulations,
\begin{equation}
\label{eqA8}
{\bf \tilde {Z}}=\left[ {{\begin{array}{*{20}c}
 {({\bf W}_1^{11} {\bf \Lambda }_r +{\bf I}_r )({\bf \Lambda }_r {\bf W}_2^{11} +{\bf I}_r )} \hfill & {({\bf W}_1^{11} {\bf \Lambda }_r +{\bf I}_r ){\bf \Lambda }_r {\bf W}_2^{12} } \hfill \\
 {{\bf W}_1^{21} {\bf \Lambda }_r ({\bf \Lambda }_r {\bf W}_2^{11} +{\bf I}_r )} \hfill & {{\bf W}_1^{21} {\bf \Lambda }_r^2 {\bf W}_2^{12} +{\bf I}_r } \hfill \\
\end{array} }} \right]
\end{equation}
where ${\bf I}_r $ is $ r \times r$ identity matrix. Note that ${\bf \tilde {Z}}$ is Hermitian (since ${\bf Z}$ is) and use the following fact \cite{Zhang}:
\begin{equation}
\label{eqA9}
\left[ {{\begin{array}{*{20}c}
 {\bf A} \hfill & {\bf B} \hfill \\
 {{\bf B}^+} \hfill & {\bf X} \hfill \\
\end{array} }} \right]\ge {\bf 0}\leftrightarrow {\bf X}\ge {\bf B}^+{\bf A}^{-1}{\bf B}
\end{equation}
where ${\bf X},{\bf A}$ are Hermitian (and so is the block-partitioned matrix) and $\leftrightarrow $ means that the conditions are equivalent. Apply this to \eqref{eqA8} to obtain
\begin{align}
\label{eqA10} \notag
{\bf B}^+{\bf A}^{-1}{\bf B} &= {\bf W}_1^{21} {\bf \Lambda }_r ({\bf \Lambda }_r {\bf W}_2^{11} +{\bf I}_r )(({\bf W}_1^{11} {\bf \Lambda }_r +{\bf I}_r )({\bf \Lambda }_r {\bf W}_2^{11} +{\bf I}_r ))^{-1}({\bf W}_1^{11} {\bf \Lambda }_r +{\bf I}_r ){\bf \Lambda }_r {\bf W}_2^{12} \\
&= {\bf W}_1^{21} {\bf \Lambda }_r^2 {\bf W}_2^{12} \le {\bf W}_1^{21} {\bf \Lambda }_r^2 {\bf W}_2^{12} +{\bf I}_r ={\bf X}
\end{align}
so that ${\bf \tilde {Z}}\ge {\bf 0}$ and thus ${\bf Z}\ge {\bf 0}$ follow. Since $|{\bf Z}|\ne {\bf 0}$, it further follows that ${\bf Z}>{\bf 0}$ and thus
\begin{align}
{\bf W}_1 -{\bf W}_2 +{\bf M}>{\bf 0}
\end{align}

To prove \eqref{eq4a}, note that
\begin{align}
{\bf 0} < {\bf U}_{r+}^+ ({\bf W}_1 - {\bf W}_2 + {\bf M} ){\bf U}_{r+} = {\bf U}_{r+}^+ ({\bf W}_1 - {\bf W}_2){\bf U}
\end{align}
where the columns of ${\bf U}_{r+}$ are the active eigenvectors $\{{\bf u}_{i+}\}$. The inequality follows since ${\bf W}_1 - {\bf W}_2 + {\bf M} > {\bf 0}$ and the columns of ${\bf U}_{r+}$ being linearly independent:
\begin{align}
{\bf x}^+ {\bf U}_{r+}^+ ({\bf W}_1 - {\bf W}_2 + {\bf M} ){\bf U}_{r+} {\bf x} = {\bf \tilde{x}}^+ ({\bf W}_1 - {\bf W}_2 + {\bf M}){\bf \tilde{x}} > {\bf 0} \ \forall {\bf x \neq 0}
\end{align}
where ${\bf \tilde{x}} = {\bf U}_{r+} {\bf x} \neq {\bf 0}$ since the columns of ${\bf U}_{r+}$ are linearly independent. The equality follows since ${\bf MR}={\bf 0}$ implies ${\bf M}{\bf U}_{r+} = \bf 0$.  \eqref{eq4} follows from \eqref{eq4a} by expressing $\bx=\bU_{r+}\bz$ for some $\bz$.

\subsection{Proof of Proposition \ref{prop:LB}}
We will need the following technical Lemma.

\textbf{Lemma 3}: Consider the function
\[
f({\rm {\bf X}})=\ln \left| {{\rm
{\bf I}}-{\rm {\bf B}}({\rm {\bf A}}+{\rm {\bf X}})^{-1}{\rm {\bf B}}}
\right|,
\]
where ${\rm {\bf A}},{\rm {\bf B}},{\rm {\bf X}}\ge {\rm {\bf 0}}$ are positive semi-definite matrices, ${\rm {\bf I}}$ is the identity matrix, ${\rm {\bf BA}}^{-1}{\rm {\bf B}}\le {\rm {\bf I}}$. It has the following properties:

\begin{enumerate}
\item $f({\rm {\bf X}})$ is increasing in ${\rm {\bf X}}$: ${\rm {\bf X}}_1 \le {\rm {\bf X}}_2 \to f({\rm {\bf X}}_1 )\le f({\rm {\bf X}}_2 )$.
\item $f({\rm {\bf X}})$ is concave in ${\rm {\bf X}}$:
 \[
 f(\alpha {\rm {\bf X}}_1 +\beta {\rm {\bf X}}_2 )\ge \alpha f({\rm {\bf X}}_1 )+\beta f({\rm {\bf X}}_2 ),
  \]
  for $\alpha +\beta =1, \ 0\le \alpha , \ \beta \le 1$.
 \end{enumerate}

\textbf{Proof: }1st property follows from the (easy to verify) fact that $-{\rm {\bf B}}({\rm {\bf A}}+{\rm {\bf X}})^{-1}{\rm {\bf B}}$ is
increasing in ${\rm {\bf X}}$ (in the matrix positive definite ordering sense
\cite{Zhang}). 2nd one is obtained from the following chain argument:
\begin{eqnarray}
\label{eqC1}
f(\alpha {\rm {\bf X}}_1 + \beta {\rm {\bf X}}_2 ) &=& \ln \left| {{\rm {\bf I}}-{\rm {\bf B}}({\rm {\bf A}}+\alpha {\rm {\bf X}}_1 +\beta {\rm {\bf X}}_2 )^{-1}{\rm {\bf B}}} \right| \mbox{} \\ \notag
&{\mathop \ge \limits^{(a)}}&
\ln \left| {{\rm {\bf I}}-\alpha {\rm {\bf B}}
{\rm {\bf A}}_1^{-1}{\rm {\bf B}}-\beta {\rm {\bf
B}}{\rm {\bf A}}_2^{-1}{\rm {\bf B}}} \right| \\ \notag
&{\mathop \ge \limits^{(b)}}& \alpha \ln \left| {{\rm {\bf I}}-{\rm {\bf B}} {\rm {\bf A}}_1^{-1}{\rm {\bf B}}} \right| + \beta \ln \left| {{\rm {\bf I}}-{\rm {\bf B}} {\rm {\bf A}}_2^{-1}{\rm {\bf B}}} \right| \\ \notag
&=& \alpha f({\rm {\bf X}}_1 )+\beta f({\rm {\bf X}}_2 )
\end{eqnarray}
where ${\rm {\bf A}}_i = {\rm {\bf A}}+{\rm {\bf X}}_i$; (a) follows from the facts that $F({\rm {\bf X}})={\rm {\bf X}}^{-1}$ is convex in ${\rm {\bf X}}$ and $F({\rm {\bf X}})=\ln \left| {\rm {\bf X}} \right|$ is increasing \cite{Boyd}\cite{Zhang}; (b) follows from the fact that $F({\rm {\bf X}})=\ln \left| {\rm {\bf X}} \right|$ is concave \cite{Boyd}. \qed

We now assume that ${\rm {\bf W}}_{i+} >{\rm {\bf 0}}$. The case of singular ${\rm {\bf W}}_{i+} $ will follow from the standard continuity argument \cite{Zhang} (i.e. use ${\bf W}_{i\epsilon} = {\bf W}_{i+} + \epsilon {\bf I}$, $\epsilon > 0$, instead of ${\bf W}_{i+}$ and then take $\epsilon \rightarrow 0$; see section 2.6 in \cite{Zhang} for more details and examples). Observe that
\begin{eqnarray}
\label{eqC2}
 C_+ ({\rm {\bf R}}) &=& \ln \frac{\left| {{\rm {\bf W}}_{1+} }
\right|}{\left| {{\rm {\bf W}}_{2+} } \right|} + \ln \frac{\left| {{\rm {\bf W}}_{1+}^{-1} +{\rm {\bf R}}} \right|}{\left| {{\rm {\bf W}}_{2+}^{-1} +{\rm {\bf R}}} \right|} \\ \notag
&=& c + \ln \left| {{\rm {\bf I}}- {\rm \bf{ \Delta W}} ({\rm {\bf W}}_{2+}^{-1} +{\rm {\bf R}})^{-1}} \right| \\ \notag
&=& c + \ln \left| {{\rm {\bf I}}- {\rm \bf{ \Delta W}}^{1/2}({\rm {\bf W}}_{2+}^{-1} +{\rm {\bf R}})^{-1} {\rm \bf{ \Delta W}}^{1/2}} \right|
\end{eqnarray}
where $c = \ln \left| {{\rm {\bf W}}_{1+} }\right| - \ln \left| {{\rm {\bf W}}_{2+} } \right|$ and $ {\rm \bf{ \Delta W}} = {\rm {\bf W}}_{2+}^{-1} -{\rm {\bf W}}_{1+}^{-1}$,
and apply Lemma 3 to the last term of the last expression in \eqref{eqC2}. It is easy to verify that ${\rm {\bf BA}}^{-1}{\rm {\bf B}}\le {\rm {\bf I}}$ (since ${\rm {\bf W}}_{2+}^{-1} -{\rm {\bf W}}_{1+}^{-1} \le {\rm {\bf W}}_{2+}^{-1} )$ and that ${\rm {\bf B}}\ge {\rm {\bf 0}}$ (since ${\rm {\bf W}}_{1+} \ge
{\rm {\bf W}}_{2+})$, so that the properties of $C_+(\bR)$ follow.
To prove the lower bound, note that the problem in \eqref{eq8a}   limits the optimization to the positive eigenspace of $\bW_1-\bW_2$ and thus is sub-optimal. To prove the achievability of the lower bound, note that, in the low-SNR regime, one obtains  $C(\bR) \approx tr(\bW_1-\bW_2)\bR$ so that rank-1 transmission on the largest eigenmode of $\bW_1-\bW_2$ is optimal. But this eigenmode is in the positive eigenspace of $\bW_1-\bW_2$ (unless it is negative, in which case the capacity is zero) so that this transmission is also optimal for the projected problem. When eigenvectors of $\bW_1$ and $\bW_2$ are the same, the achievability follows from the respective result for parallel channels in \cite{Khisti-08}\cite{Li'10} (since an optimal covariance also has the same eigenvectors). When the channel is degraded, the projection has no effect since $\bW_1-\bW_2 \ge 0$ so that the problems in \eqref{eq3} and \eqref{eq8a} are identical. \qed



\begin{thebibliography}{1}
\bibitem{Bolcskei} H. Bolcskei et al (Eds.), Space-Time Wireless Systems: From Array Processing to MIMO Communications, Cambridge University Press, Cambridge, 2006.
\bibitem{Bloch-11} M.~Bloch and J.~Barros, \emph{Physical-Layer Security: From Information Theory to Security Engineering}.\hskip 1em plus 0.5em minus 0.4em\relax Cambridge University Press, 2011.
\bibitem{Li} Z. Li, W. Trappe, R. Yates, Secret communication via multi-antenna transmission, Conf. Information Sciences and Systems (CISS), Mar. 2007.
\bibitem{Gopala} P. K. Gopala, L. Lai, H. El Gamal, On the Secrecy Capacity of Fading Channels, IEEE Trans. Info. Theory, v. 54, No. 10, Oct. 2008.
\bibitem{Bustin} R. Bustin et al, An MMSE Approach to the Secrecy Capacity of the MIMO Gaussian Wiretap Channel, EURASIP Journal on Wireless Communications and Networking, 2009, Article ID 370970.
\bibitem{Liu} T. Liu, S. Shamai (Shitz), A Note on the Secrecy Capacity of the Multiple-Antenna Wiretap Channel, IEEE Trans. Info. Theory, v. 55, No. 6, June 2009.
\bibitem{Shafiee} S. Shafiee, N. Liu, S. Ulukus, Towards the Secrecy Capacity of the Gaussian MIMO Wire-Tap Channel: The 2-2-1 Channel, IEEE Trans. Info. Theory, v. 55, No. 9, Sep. 2009.
\bibitem{Khisti-1} A. Khisti, G.W. Wornell, Secure Transmission With Multiple Antennas---Part I: The MISOME Wiretap Channel, IEEE Trans. Info. Theory, v. 56, No. 7, July 2010.
\bibitem{Khisti-2} A. Khisti, G.W. Wornell, Secure Transmission With Multiple Antennas---Part II: The MIMOME Wiretap Channel, IEEE Trans. Info. Theory, v. 56, No. 11, Nov. 2010.
\bibitem{Oggier} F. Oggier, B. Hassibi, The Secrecy Capacity of the MIMO Wiretap Channel, IEEE Trans. Info. Theory, v. 57, No. 8, Aug. 2011.
\bibitem{Khisti-08} A. Khisti et al, Secure Broadcasting Over Fading Channels, IEEE Trans. Info. Theory, v. 54, No. 6, pp. 2453-2469, June 2008.
\bibitem{Li'10} Z. Li et al, Secrecy Capacity of Independent Parallel Channels, in R. Liu, W. Trappe (eds.), Securing Wireless Communications at the Physical Layer, Springer, 2010.
\bibitem{Li-2} J. Li, A. Petropulu, Transmitter Optimization for Achieving Secrecy Capacity in Gaussian MIMO Wiretap Channels, 	 arXiv:0909.2622v1, Sep 2009.
\bibitem{Li-11} J. Li, A. Petropulu, On Beamforming Solution for Secrecy Capacity of MIMO Wiretap Channels, IEEE Globecom, Houston, USA, 5-9 Dec. 2011.
\bibitem{Loyka} S. Loyka, C.D. Charalambous, On Optimal Signaling over Secure MIMO Channels, IEEE International Symposium on Information Theory, MIT, Boston, USA, July 2012.
\bibitem{Loyka-13} S. Loyka, C.D. Charalambous, Further Results on Optimal Signaling over Secure MIMO Channels, IEEE International Symposium on Information Theory, Istanbul, Turkey, July 2013.
\bibitem{Boyd} S. Boyd, L. Vandenberghe, Convex Optimization, Cambridge University Press, 2004.
\bibitem{Berstekas} D.P. Bertsekas, Nonlinear Programming, Athena Scientific, 2nd Ed., 2008.
\bibitem{Zhang} F. Zhang, Matrix Theory, Springer, 1999.
\bibitem{Brinkhuis}  J. Brinkhuis, V. Tikhomirov, Optimization: Insights and Applications, Princeton University Press, 2005.
\bibitem{Tse} D.Tse, P.Viswanath, Fundamentals of wireless communication, Cambridge University Press, 2005.
\bibitem{Horn} R.A. Horn, C.R. Johnson, Matrix Analysis, Cambridge University Press, 1985.





\end{thebibliography}
\end{document}